\documentclass[journal]{IEEEtran}
\usepackage{amsmath, amssymb, cite, tikz, graphicx, xspace, mathrsfs, mathtools, psfrag, chemarrow, subfigure,color,soul,graphicx,mathtools,adjustbox}
\usepackage{kbordermatrix}
\usetikzlibrary{automata, positioning}
\usepackage{standalone}
\usepackage{enumitem}
\usepackage{bbm}
\usepackage[utf8]{inputenc}
\usepackage{booktabs}
\usepackage{amscd}
\usepackage{amsmath}
\usepackage{amssymb}
\usepackage{amsthm}

\usepackage{epsfig}
\usepackage{verbatim}
\usepackage{graphicx}
\usepackage{amsthm}
\usepackage{color}
\usepackage{ multirow }
\usepackage[noend]{algpseudocode}
\usetikzlibrary{patterns}
\usetikzlibrary{angles} 
\usepackage{eucal}
\usepackage{algorithm}
\usepackage{tikz}
\usetikzlibrary{shapes, arrows, decorations.markings, arrows.meta}
\usetikzlibrary{patterns} 
\usetikzlibrary{arrows,automata,positioning,chains,calc}
\usetikzlibrary{quotes,angles}
\newtheorem{remark}{Remark}
\makeatletter
\def\BState{\State\hskip-\ALG@thistlm}
\makeatother
\newtheorem{theorem}{Theorem}
\newtheorem{lemma}[theorem]{Lemma}
\begin{document}
	\setlength{\abovecaptionskip}{-3pt}
	\setlength{\belowcaptionskip}{1pt}
	\setlength{\floatsep}{1ex}
	\setlength{\textfloatsep}{1ex}	
	\title{Optimal Sampling and Actuation for Real-Time Monitoring of Markov Sources}
    \author{Mehrdad Salimnejad, Anthony Ephremides,  \IEEEmembership{Life Fellow, IEEE}, Marios Kountouris, \IEEEmembership{Fellow, IEEE}, and Nikolaos Pappas, 
\IEEEmembership{Senior Member, IEEE}.
\thanks{\scriptsize M. Salimnejad and N. Pappas are with the Department of Computer and Information Science, Linköping University, Sweden, email: \{\texttt{mehrdad.salimnejad, nikolaos.pappas\}@liu.se}. A. Ephremides is with the Department of Electrical and Computer Engineering, University of Maryland, College Park, MD, USA, email: \texttt{etony@umd.edu}. M. Kountouris is with the Department of Computer Science and Artificial Intelligence, Andalusian Research Institute in Data Science and Computational Intelligence (DaSCI), University of Granada, Spain, email: \texttt{mariosk@ugr.es}. This work has been supported in part by ELLIIT, the European Union (ROBUST-6G, 101139068,  6G-LEADER, 101192080, and SOVEREIGN, 101131481), and the European Research Council (ERC) SONATA Project (Grant agreement No. 101003431). A shorter version of this work has been published in \cite{salimnejad2026optimalsamplingactuationpolicies}.}}

	\maketitle

\begin{abstract}
This paper studies efficient data management and timely information dissemination for real-time monitoring of an $N$-state Markov process, with the objective of enabling accurate state estimation and reliable actuation decisions. 
We analyze the real-time reconstruction error and the Age of Incorrect Information (AoII), and derive closed-form expressions for their time-averaged values under several sampling and transmission policies. We then formulate and solve constrained optimization problems to minimize the time-averaged reconstruction error and the average AoII under a time-averaged sampling frequency constraint. The resulting optimal sampling and transmission policies are compared to identify the conditions under which each policy is most effective. We further show that directly using the reconstructed state for actuation can degrade system performance, especially when the receiver is uncertain about the state estimate or when actuation is costly. These findings reveal that accurate state estimation alone does not necessarily lead to effective actuation, highlighting the importance of incorporating uncertainty into the decision-making process. To address this issue, we introduce a cost function, termed the \emph{Cost of Actions under Uncertainty} (CoAU), which characterizes correct and incorrect actuation decisions under receiver-side uncertainty. We propose a \emph{randomized actuation policy} and derive a closed-form expression for the probability of a correct actuation decision, defined as the event in which the CoAU equals zero. Finally, we formulate an optimization problem to find the optimal randomized actuation policy that maximizes this probability. The results show that the resulting policy substantially reduces incorrect actuator actions.
	\end{abstract}
	
\section{Introduction}
Real-time networked monitoring systems increasingly support decisions that have direct physical consequences, such as triggering medical alerts, coordinating autonomous vehicles, regulating industrial processes, or controlling cyber-physical infrastructures~\cite{hult2016coordination,vitturi2013industrial,pappas2022agebook,shreedhar2019age}.  The effectiveness of these applications depends not only on receiving status updates but also on ensuring that the received information is timely, relevant, and actionable for decision-making and actuation.

As these systems operate under limited energy, bandwidth, and computational resources, transmitting every observed update is often inefficient and may even degrade system performance. Many generated updates are either redundant, weakly informative, or irrelevant to the downstream task. This has motivated a shift from conventional data-centric communication toward goal-oriented and semantics-aware communication, where updates are generated and transmitted according to their relevance to the system objective~\cite{kountouris2021semantics,kalfa2021towards,popovski2020semantic,popovski2022perspective,gunduz2022beyond,strinati2024goal,luo2025informationfreshnesssemanticsinformation}. By avoiding unnecessary transmissions, such policies improve resource efficiency and scalability while preserving the information needed for timely decision-making.

The Age of Information (AoI) is a widely used metric for quantifying the timeliness, or freshness, of received information~\cite{kaul2012real}. It is defined as the time elapsed since the generation of the most recently received update. However, freshness alone does not determine whether the receiver has an accurate or useful representation of the source. To overcome this limitation, the Age of Incorrect Information (AoII) was introduced in~\cite{maatouk2020age} to measure the duration for which the system operates with an incorrect state estimate, thereby combining timeliness and estimation accuracy. Beyond AoI and AoII, several semantics-aware metrics have been proposed to capture the relevance, significance, and task effectiveness of transmitted information~\cite{pappas2021goal,salimnejad2024real,MSalimnejadJCN2023,luo2024semantic,yates2021age,cocco2023remote,MSalimnejadTCOM2025,delfani2024semantics,AliNikkhahTCOM2026}. 

Although these metrics substantially improve the design of sampling and transmission policies, they primarily assess the quality of information at the receiver. In actuation-oriented systems, however, accurate reconstruction of the source state may not be the final objective. The actuator must also decide whether the reconstructed state should be trusted and used to trigger an action. This distinction is critical because directly acting on an uncertain or outdated reconstructed state can lead to incorrect actuation decisions, especially when the channel is unreliable, the sampling rate is constrained, or actuation is costly. Conversely, remaining idle may be preferable when the receiver cannot reliably infer the correctness of the reconstructed source state. Therefore, minimizing estimation error or AoII alone does not necessarily guarantee reliable actuation. This motivates the design of actuation-aware policies that explicitly account for receiver-side uncertainty.

To address this issue, we introduce a cost function, termed the \emph{Cost of Actions under Uncertainty} (CoAU), which characterizes incorrect actuation decisions under receiver-side uncertainty. In particular, CoAU captures two types of undesirable behavior: acting when the reconstructed state is incorrect and remaining idle when the reconstructed state is correct. Accordingly, the event that CoAU equals zero corresponds to a correct actuation decision, which includes either taking an action when the estimate is correct or avoiding an action when the estimate is incorrect. This formulation enables the actuator to balance the benefit of acting against the risk of acting under uncertainty.

In this paper, we study a real-time remote monitoring system in which an $N$-state Markov source is observed by a sampler and transmitted over an unreliable wireless channel. At each time slot, the sampler decides whether to generate a sample. If a sample is generated, the transmitter sends it to the receiver; otherwise, it remains idle. The receiver reconstructs the source state from successfully received samples, and the actuator decides whether to perform an action based on the reconstructed state and the available uncertainty information. We consider randomized stationary, change-aware randomized stationary, semantics-aware randomized stationary, and threshold-aware randomized stationary sampling and transmission policies. For these policies, we derive closed-form expressions for the time-averaged reconstruction error and the average AoII. We then formulate constrained optimization problems to minimize these metrics subject to a time-averaged sampling-frequency constraint. Finally, we propose a \emph{randomized actuation} policy and optimize its parameters to maximize the probability that the CoAU equals zero under a time-averaged actuation-frequency constraint. These results demonstrate that effective real-time monitoring requires the joint design of sampling, transmission, and actuation policies.

\subsection{Related Works}
Our work considers the problem of designing scheduling policies for data generation, transmission, and actuation to enhance the performance of real-time remote monitoring of Markovian processes. Several studies have examined optimal communication and estimation policies for Markovian systems \cite{nayyar2013optimal,chakravorty2014optimal,shi2012scheduling,wu2018optimal,chakravorty2019remote}. In different scenarios—such as energy-limited sources \cite{nayyar2013optimal}, costly transmissions \cite{chakravorty2014optimal}, bandwidth-constrained multi-sensor systems \cite{shi2012scheduling,wu2018optimal}, and power-controlled transmission over time-varying channels with feedback \cite{chakravorty2019remote}—these studies show that optimal strategies usually have simple and structured forms, including threshold-based, periodic, or monotone policies. However, these works generally treat all information as equally important and focus on minimizing estimation error, without explicitly considering the timeliness, relevance, or task-specific importance of the information, or its effect on actuation performance. The works 
\cite{pappas2021goal,salimnejad2024real,MSalimnejadJCN2023,jayanth23,fountoulakis2023goal,luo2024semantic,luo2025cost,cocco2023remote,santi2024remote,talli2024pragmatic,talli2024push,yates2021age,buyukatesversion,kaswan2022timely,mitra2023age,KaswanTCOM2023,KaswanJSAC2023,delfani2023version,MSalimnejadTCOM2025,MSalimnejadWCNC2025,holm2021freshness,delfani2024semantics,poojary2017real,jiang2019status,zhou2020age,kalor2019minimizing,ramakanth2024monitoring,salimnejad2025real}
investigate the goal-oriented effectiveness of communicated updates by optimizing semantics-aware performance metrics across the entire information chain, from data generation to transmission and actuation. The work \cite{pappas2021goal} introduces the cost of actuation error to quantify the impact of incorrect actions at the receiver, explicitly linking communication decisions to control performance. Building on this perspective, joint sampling and transmission policies for Markovian sources were developed in \cite{salimnejad2024real, MSalimnejadJCN2023}, where consecutive error and cost of memory error \cite{salimnejad2024real} capture error persistence, and importance-aware consecutive error \cite{MSalimnejadJCN2023} incorporates task relevance. The urgency of persistent mismatches was further modeled through the Age of Consecutive Error (AoCE) in \cite{luo2025cost}, which leads to structured threshold-based optimal policies. Optimal strategies under communication and resource constraints have also been widely studied. Distortion minimization under AoI and source costs was analyzed in \cite{jayanth23}, actuation-cost-aware policies were developed in \cite{fountoulakis2023goal} and extended to multi-source systems in \cite{luo2024semantic}, and constrained-channel monitoring was examined in \cite{cocco2023remote, santi2024remote}. Push- and pull-based architectures were investigated in \cite{talli2024pragmatic, talli2024push}, highlighting tradeoffs between control performance and communication overhead.  In addition, extensions of the AoI framework incorporate source evolution and the relevance of information. The Version Age of Information (VAoI) was introduced in \cite{yates2021age} to measure the version lag between source and monitor, and its minimization in gossip networks was studied in \cite{buyukatesversion, kaswan2022timely, mitra2023age, KaswanTCOM2023, KaswanJSAC2023, delfani2023version}. For Markovian sources, \cite{MSalimnejadTCOM2025} proposed Version Innovation Age (VIA) and Age on Incorrect Version (AoIV), enabling semantics-aware sampling and transmission policies, with energy harvesting extensions analyzed in \cite{MSalimnejadWCNC2025}. In pull-based settings, the Query Age of Information  (QAoI) was introduced in \cite{holm2021freshness}, and the Query Version AoI (QVAoI) was proposed in \cite{delfani2024semantics} to jointly capture freshness and importance at query instances. Furthermore, the monitoring of correlated sources under temporal and spatial dependencies has been studied in \cite{poojary2017real, jiang2019status, zhou2020age, kalor2019minimizing, ramakanth2024monitoring, salimnejad2025real}, addressing challenges such as correlation-aware scheduling and AoI or reconstruction-error minimization. However, existing studies—even in semantics-aware frameworks—largely overlook how receiver-side uncertainty and actuation costs affect downstream decision-making, often assuming that actions can be directly and safely taken based on reconstructed states, which can result in significant performance degradation in actuation-oriented systems.

Despite these advances, most existing semantics-aware frameworks optimize the quality of reconstructed information rather than the reliability of the subsequent actuation decision. In particular, they often assume that the actuator can directly use the reconstructed state or the most recently received update. This assumption may be inappropriate when the receiver is uncertain about whether the reconstructed state matches the true source state. The present work addresses this gap by jointly considering sampling, transmission, receiver-side uncertainty, and randomized actuation.

\subsection{Contributions}
The main contributions of this paper are summarized as follows:
\begin{enumerate}
    \item We derive closed-form expressions for the time-averaged reconstruction error and the average AoII for an $N$-state Markov source under randomized stationary, change-aware randomized stationary, and semantics-aware randomized stationary joint sampling and transmission policies proposed in \cite{salimnejad2024real,MSalimnejadTCOM2025}. For the AoII metric, we further introduce a \emph{threshold-aware randomized stationary} policy that prioritizes sampling when incorrect information persists at the receiver. 
    \item We formulate constrained optimization problems for minimizing the time-averaged reconstruction error and the average AoII under a time-averaged sampling-frequency constraint. The resulting solutions characterize the operating regimes in which each sampling and transmission policy is most effective.
    \item We introduce the \emph{Cost of Actions under Uncertainty} (CoAU), an actuation-aware cost function that captures incorrect actuation decisions under receiver-side uncertainty. Based on this metric, we propose a randomized actuation policy in which the actuator probabilistically decides whether to act depending on whether the receiver is certain or uncertain about the reconstructed state.
    \item We derive closed-form expressions for the probability that the CoAU equals zero under the proposed randomized actuation policy. We then formulate and solve a constrained optimization problem that maximizes this probability subject to a time-averaged actuation-frequency constraint. The results demonstrate that the optimized randomized actuation policy can substantially reduce incorrect actuation decisions compared with an always-act policy.
\end{enumerate}

\section{System Model}
\label{SystemModel}
We consider a remote monitoring system that includes an information source, a sampler, a transmitter, a receiver, and an actuator, as shown in Fig. 1. The system operates over discrete time, which we represent as a sequence of equally spaced time slots indexed by $t\in \mathbb{N}$. The state of the information source at time slot $t$ is denoted by $X(t)$, and it is modeled as an $N$-state discrete-time Markov chain (DTMC). This state represents information that can be mapped to an action at the receiver side, or it can be considered as a control set that determines the actuation at the actuation point. Specifically, the self-transition probability and the probability of transitioning to a different state at time slot $t+1$ are defined as $\mathbb{P}\big[X(t+1)=i \big|X(t)=i\big]=q$ and $\mathbb{P}\big[X(t+1)=j \big|X(t)=i\big]=p, \forall i,j \in\{0,1,\cdots, N-1\}$ with $i\neq j$, respectively, where $q+(N-1)p=1$. 
\par At the beginning of each time slot $t$, the sampler observes the current state of the information source and decides whether to perform sampling. The sampling action at time slot $t$ is represented by $\alpha(t) \in \{0,1\}$, where $\alpha(t) = 1$ indicates that the source is sampled and $\alpha(t) = 0$ indicates that no sampling occurs. When the source is sampled, the transmitter immediately sends the sample as a packet over the wireless communication channel to the receiver; otherwise, the transmitter remains idle. We assume that the transmission is successful when the channel state $h(t)$ equals $1$, while $h(t)=0$ indicates a decoding failure. The probability of successful decoding is defined as $p_s = \mathbb{P}[h(t) = 1]$. 
At the end of time slot $t$, the receiver constructs an estimate of the information source state, denoted by $\hat{X}(t)$, based on the successfully received samples. If the receiver successfully decodes the transmitted sample, then $X(t)=\hat{X}(t)$; otherwise, the reconstructed state remains unchanged, i.e., $\hat{X}(t) = \hat{X}(t-1)$. Likewise, if the transmitter does not send a sample or if the transmission fails, it is assumed that $\hat{X}(t) = \hat{X}(t-1)$. At time slot $t$, the system is considered to be in a synchronized (synced) state if $X(t) = \hat{X}(t)$, and in an erroneous state if $X(t) \neq \hat{X}(t)$. Acknowledgment (ACK) and negative-acknowledgment (NACK) packets are used to inform the transmitter about the success or failure of transmissions, and these feedback packets are assumed to be delivered instantly and without errors. Therefore, the transmitter has perfect knowledge of the receiver's reconstructed state at each time slot. In addition, we assume that any sample from an unsuccessful transmission is discarded. The actuator then decides whether to perform an action based on the reconstructed source state. The actuator’s action at time slot $t$ is denoted by $c(t) \in \{0,1\}$, where $c(t)=1$ indicates that the actuator performs an action, and $c(t)=0$ indicates that it remains idle.  Here, for sampling and transmission, we consider three existing policies: \emph{randomized stationary}, \emph{change-aware randomized stationary}, and \emph{semantics-aware randomized stationary}, as introduced in \cite{salimnejad2024real} and \cite{MSalimnejadTCOM2025}. A brief description of each policy is given below.
\begin{enumerate}
   \item  Randomized Stationary (RS): In this policy, a new sample is generated at each time slot $t$ with probability $p^{r}_{\alpha}$, regardless of whether the system is synced. 
    \item Change-aware randomized stationary (CARS): In this policy, at time slot $t$, no sample is taken if the source state has not changed, i.e., $X(t)=X(t-1)$. Otherwise, if $X(t) \neq X(t-1)$, a new sample is generated with probability $p^{c}_{\alpha}$.
    \item Semantics-aware randomized stationary (SARS): In this policy, at each time slot, the generation of a new sample is triggered probabilistically only when the system is in an incorrect state, i.e., $X(t)\!\neq\! \hat{X}(t-1)$. Specifically, if $X(t-1)\!\!=\!\!\hat{X}(t-1)$ and $X(t)\!\!\neq\!\! X(t-1)$, the sampler generates a sample at time slot $t$ with probability $q_{\alpha_{1}}$. Furthermore, if $X(t-1)\!\!\neq\!\! \hat{X}(t-1)$ and  $X(t)\!\!\neq \!\!\hat{X}(t\!-\!1)$, the sampler generates a sample at time slot $t$ with probability $q_{\alpha_{2}}$, where $q_{\alpha_{2}}\geqslant q_{\alpha_{1}}$.
\end{enumerate}

\begin{figure}[t!]
    \centering
    \includegraphics[width=1\linewidth]{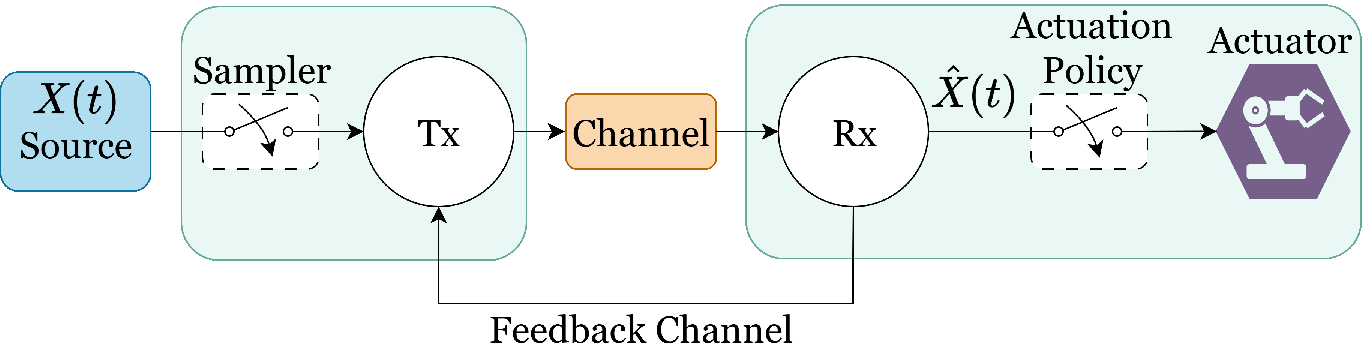}
    \caption{Real-time monitoring of a Markovian source over a wireless channel.}
    \label{system_model_fig}
\end{figure}  
\section{Performance Metrics and Analysis}
\par In this section, we investigate the impact of incorrect information and actions at the receiver and actuator, respectively. We first evaluate two metrics—\emph{time-averaged reconstruction error} and \emph{Age of Incorrect Information} (AoII)—to measure reconstruction accuracy at the receiver. However, making actuation decisions directly based on the reconstructed states may pose safety risks, particularly when the receiver is uncertain about the correctness of the reconstructed source state or when actuation incurs high costs. To address this, we introduce the \emph{Cost of Actions under Uncertainty} (CoAU), a metric that explicitly quantifies incorrect actuation decisions.
\subsection{Real-Time Reconstruction Error}The real-time reconstruction error at time slot $t$ is defined as the difference between $X(t)$ and $\hat{X}(t)$, i.e.,
\begin{align}
    \label{Error}
    E(t) = \big|X(t)-\hat{X}(t)\big|,
\end{align}
The time-averaged reconstruction error over the observation interval $[1, T]$, where $T$ is a large positive number, is calculated as:
\begin{align}
    \label{PE}
    \bar{E} =  \lim_{T \to \infty} \frac{1}{T} \sum_{t=1}^{\infty} E(t) = \lim_{T \to \infty} \frac{1}{T} \sum_{t=1}^{\infty} \big|X(t)-\hat{X}(t)\big|.
\end{align}
\begin{lemma}
		\label{theorem_InstError}
For an $N$-state DTMC information source, the time-averaged reconstruction error under the RS policy is given by:
        \begin{align}
\label{PE_RS_lemma}
\bar{E}^{\text{RS}}=\frac{p(N^{2}-1)(1-p^{r}_{\alpha}p_{\text{s}})}{3\big[p^{r}_{\alpha}p_{\text{s}}+Np(1-p^{r}_{\alpha}p_{\text{s}})\big]}.
        \end{align}
Furthermore, under the SARS policy, the time-averaged reconstruction error is expressed as:
\begin{align}
\label{PE_SA_lemma}
\bar{E}^{\text{SARS}}=\frac{p(1-q_{\alpha_{1}}p_{s})(N^{2}-1)}{3\big[Np-(N-1)pq_{\alpha_{1}}p_{s}+(1-p)q_{\alpha_{2}}p_{s}\big]}.
        \end{align}
Moreover, the time-averaged reconstruction error for the
CARS policy can be calculated as:
\begin{align}
    \label{PE_CA_lemma}
    \bar{E}^{\text{CARS}}=\frac{(N^{2}-1)(1-p^{c}_{\alpha}p_{s})}{3\big[N-p^{c}_{\alpha}p_{s}\big]}.
\end{align}
	\end{lemma}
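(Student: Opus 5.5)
The plan is to model the pair $\big(X(t),\hat{X}(t)\big)$ as a finite Markov chain on $\{0,\dots,N-1\}^2$ and read $\bar{E}$ off its stationary distribution $\pi$. Under each of the three policies, the sampling decision at slot $t$ depends only on $X(t)$, $X(t-1)$ and $\hat{X}(t-1)$. The channel outcome $h(t)$ is i.i.d. across slots. Hence $\big(X(t),\hat{X}(t)\big)$ is a time-homogeneous Markov chain. For nondegenerate parameters it is irreducible and aperiodic; for example, when $p>0$, state $(i,i)$ can self-loop with probability at least $q>0$. By ergodicity, $\bar{E}=\sum_{i,j}|i-j|\,\pi_{i,j}$.

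The first key step is a symmetry argument. The source transition law is invariant under every permutation $\sigma$ of the state labels. Each policy only uses the equality or inequality relations among $X(t),X(t-1),\hat{X}(t-1)$. Therefore the joint transition kernel is invariant under $(i,j)\mapsto(\sigma(i),\sigma(j))$. Uniqueness of $\pi$ then forces $\pi$ to be invariant under this action. The symmetric group acts transitively on the diagonal pairs and on the off-diagonal pairs. So, writing $\pi_e=\mathbb{P}[X\neq\hat{X}]$, we get $\pi_{i,i}=(1-\pi_e)/N$ and $\pi_{i,j}=\pi_e/(N(N-1))$ for $i\neq j$. Using $\sum_{i\neq j}|i-j|=2\sum_{d=1}^{N-1}d(N-d)=N(N^2-1)/3$, this gives
\begin{align*}
\bar{E}=\frac{(N+1)}{3}\,\pi_e .
\end{align*}

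The second step computes $\pi_e$ from a lumped two-state chain (synced/erroneous). Lumpability follows from the same symmetry: every synced state has the same probability $\beta$ of moving to error, and every erroneous state has the same probability $\gamma$ of moving to synced. Then $\pi_e=\beta/(\beta+\gamma)$. The rates for each policy are as follows.
\begin{itemize}
\item \textbf{RS}, with $a=p^{r}_{\alpha}p_s$. From synced, the source changes (probability $(N-1)p$) and the update is not delivered, so $\beta=(N-1)p(1-a)$. From error, we resync either by a successful delivery or by the source jumping onto $\hat{X}$ without delivery, so $\gamma=a+p(1-a)$.
\item \textbf{SARS}. From synced, $\beta=(N-1)p(1-q_{\alpha_1}p_s)$. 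From error, either the source moves to $\hat{X}$ (probability $p$), or it lands on another wrong state (probability $1-p$) and a $q_{\alpha_2}$-sample succeeds. Thus $\gamma=p+(1-p)q_{\alpha_2}p_s$.
\item \textbf{CARS}, with $c=p^{c}_{\alpha}p_s$. Here $\beta=(N-1)p(1-c)$. From error, if the source stays put no sample is taken. If it jumps to $\hat{X}$ we resync regardless. If it jumps to one of the other $N-2$ states we resync only by a successful sample. Hence $\gamma=p+(N-2)pc$.
\end{itemize}

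Substituting each pair into $\bar{E}=\frac{N+1}{3}\cdot\frac{\beta}{\beta+\gamma}$ and simplifying reproduces \eqref{PE_RS_lemma}, \eqref{PE_SA_lemma} and \eqref{PE_CA_lemma}. For CARS the factor $p$ cancels, giving $\pi_e=(N-1)(1-c)/(N-c)$.

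The main obstacle is making the symmetry/lumpability step rigorous, especially for SARS and CARS, whose decisions look one step back. I would handle this by noting that the transition from $(X(t-1),\hat{X}(t-1))$ to $(X(t),\hat{X}(t))$ is fully determined by the previous pair, the new source value and $h(t)$. This makes the permutation invariance of the kernel explicit. I would then check the distinct cases ($X(t)=X(t-1)$, $X(t)=\hat{X}(t-1)$, or a different value) to confirm that $\beta$ and $\gamma$ are state-independent. Degenerate parameter values (e.g. $p=0$) would be set aside as excluded by ergodicity.
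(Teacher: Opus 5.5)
Your proposal is correct and arrives at exactly the stationary law the paper uses, but by a different derivation. The paper writes out the kernel of $\big(X(t),\hat{X}(t)\big)$ (e.g.\ \eqref{transprob_k}) and simply states the solution \eqref{pij_RS}, \eqref{pij_SARS}, \eqref{pij_CARS} of the $N^2$ balance equations. It then builds $\mathbb{P}[E(t)=i]$ and sums.

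You instead get the two-value structure of $\pi$ from permutation invariance of the kernel together with uniqueness of $\pi$. You then obtain the one remaining unknown from the synced/erroneous cut, $\pi_e=\beta/(\beta+\gamma)$. Your $\beta,\gamma$ for RS, SARS and CARS are all correct. They give $\pi_{i,i}=(1-\pi_e)/N$ and $\pi_{i,j}=\pi_e/(N(N-1))$, matching the paper.

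Your route buys three things:
\begin{itemize}
\item a genuine derivation in place of an asserted solution;
\item the clean identity $\bar{E}=\frac{N+1}{3}\,\mathbb{P}[X(t)\neq\hat{X}(t)]$;
\item an immediate explanation of the Remark's relations \eqref{AoII_PE_RS}--\eqref{AoII_PE_CARS}. For RS, SARS and CARS, the AoII chain of Appendix~\ref{Appendix_Lemma_CE} is this same two-state skeleton with geometric error sojourns, so $\overline{\text{AoII}}=\pi_e/\gamma$.
\end{itemize}
The paper's explicit kernel and $\pi_{i,j}$, in turn, are what it reuses for the joint probabilities in Appendices~\ref{Appendix_ControlIncorrectAction} and \ref{AvgSamplinCost_Proof}.

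Two minor precision points. First, time averages need only a unique stationary distribution (one recurrent class), not aperiodicity; your self-loop argument would also need $q>0$. Second, ``irreducible'' fails at boundary values such as $p^{r}_{\alpha}p_{s}=1$, where the erroneous states are transient. However, $\pi$ is still unique there, which is all your symmetry step requires.
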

	\begin{IEEEproof} 
		See Appendix \ref{Appendix_LemmaPij_Et}.
	\end{IEEEproof}
    \subsection{Age of Incorrect Information (AoII)}
  The AoII measures the time that has passed since the transmitted information became incorrect or outdated \cite{maatouk2020age}. Let $\text{AoII}(t)$ denote the AoII at time slot $t$. When the system is in a synced state at time $t$, $\text{AoII}(t) = 0$; otherwise, $\text{AoII}(t) = i \geqslant 1$ indicates that the system has been in an erroneous state from time slot $t - i + 1$ to $t$. The evolution of AoII is defined as follows:
    \begin{align}
		\label{AoII_definition}
		\text{AoII}\big(t\big) =
		\begin{cases}
			\text{AoII}(t-1)+1, &X(t)\neq \hat{X}(t),\\
			0, &\text{otherwise}.
		\end{cases}
	\end{align}
For the AoII metric, which captures \emph{the timeliness}, \emph{inaccuracy}, and \emph{the significance of the information} at the receiver, we additionally introduce a threshold-aware randomized stationary (TARS) joint sampling and transmission policy, in addition to the policies presented in Section~\ref{SystemModel}. Under this policy, when $\text{AoII}(t) \geqslant n$, the sampler always performs sampling; when $\text{AoII}(t) = n - 1$, the sampler performs sampling with probability $p^{th}_{\alpha}$; otherwise, the sampler remains idle. 
        \begin{lemma}
		\label{theorem_ConsecutiveError}
		For an $N$-state DTMC information source, the average AoII under the TARS policy is given by:
    \begin{align}
        \label{AvgCE_Threshold}
        \overline{\text{AoII}} = \frac{F(p^{th}_{\alpha},n)}{G(p^{th}_{\alpha},n)},
    \end{align}
    where $F(p^{th}_{\alpha},n)$ and $G(p^{th}_{\alpha},n)$ are given by:
    \begin{align}
    \label{FG_threhold}
        F(p^{th}_{\alpha},n)&\!=\!(N-1)\Big[(1-p)\big(p\!+\!(1-p)p_{s}\big)^{2}\!+\!(1-p)^{n}p_{s}\notag\\
        &\!\times\!\Big(3p^{2}\!\!-\!\!2p\!-\!np^{2}\!-\!p^{3}\!+\!np^{3}(1\!-\!p^{th}_{\alpha})\!-\!p^{2}p^{th}_{\alpha}\!+\!p^{3}p^{th}_{\alpha}\notag\\
        &\!+\!(1-p)p_{s}\big(1\!+\!(n-2)p\!-\!(n-1)p^{2}(1-p^{th}_{\alpha})\big)\Big)\Big],\notag\\
        G(p^{th}_{\alpha},n)&\!=\!p\big(p_{s}+(1-p_{s})p\big)\big[N(1-p)\big(p_{s}+(1-p_{s})p\big)\notag\\
        &-(N-1)(1-p)^{n}\big(1-(1-p^{th}_{\alpha})p\big)p_{s}\big].
    \end{align}
    Furthermore, under the RS policy, the average AoII can be expressed as follows:
    \begin{align}
        \label{AvgCE_RS}
        \overline{\text{AoII}}= \frac{(N-1)p(1-p^{r}_{\alpha}p_{s})}{\big[p+(1-p)p^{r}_{\alpha}p_{s}\big]\big[p^{r}_{\alpha}p_{s}+Np(1-p^{r}_{\alpha}p_{s})\big]}.
    \end{align}
    Moreover, the average AoII for the SARS policy can be calculated as follows:
       \begin{align}
        \label{AvgCE_SARS}
        \overline{\text{AoII}} = \!\!\frac{(N\!-\!1)p(1\!-\!q_{\alpha_{1}}p_{s})}{\big[p\!+\!(1\!-\!p)q_{\alpha_{2}}p_{s}\big]\big[Np\!+\!(1\!-\!N)pq_{\alpha_{1}}p_{s}\!\!+\!\!(1\!-\!p)q_{\alpha_{2}}p_{s}\big]}.
    \end{align}
   Additionally, for the CARS policy, the average AoII is obtained as follows:
           \begin{align}
        \label{AvgCE_CARS}
       \overline{\text{AoII}} =\frac{(N-1)(1-p^{c}_{\alpha}p_{s})}{p(N-p^{c}_{\alpha}p_{s})\big[1+(N-2)p^{c}_{\alpha}p_{s}\big]}.
    \end{align}
	\end{lemma}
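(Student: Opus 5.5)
The plan is to reduce each policy to a birth–death-type Markov chain on the AoII value itself, find its stationary distribution, and take the mean. By the symmetry of the source ($q$ for staying, $p$ for each of the other $N-1$ states), the exact values of $X(t)$ and $\hat X(t)$ do not matter. Only two facts matter: whether the system is synced, and how long it has been erroneous. Under every policy the sampling decision depends only on that information, so $\{\text{AoII}(t)\}$ is a DTMC on $\{0,1,2,\dots\}$.

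\emph{Transitions.} From state $0$ (synced), the source moves to one of the $N-1$ other states with total probability $(N-1)p$. The receiver fails to resync in the same slot with probability $1-a_0p_s$, where $a_0$ is the sampling probability used when a change occurs from a synced state: $a_0=p^r_\alpha$ (RS), $q_{\alpha_1}$ (SARS), $p^c_\alpha$ (CARS), and $a_0=0$ for TARS when $n\geqslant 2$. So $P_{0\to1}=(N-1)p(1-a_0p_s)$.

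From an erroneous state $i\geqslant1$ there are two ways to return to $0$. Either the source jumps onto $\hat X$, with probability $p$, or a sample is taken and successfully decoded. Denote the exit probability by $d_i$; otherwise the chain moves to $i+1$.
\begin{itemize}
\item RS and SARS: $d_i=p+(1-p)a p_s$, with $a=p^r_\alpha$ or $a=q_{\alpha_2}$ respectively.
\item CARS: sampling happens only on a change to another wrong state, which has probability $(N-2)p$. Hence $d_i=p+(N-2)p\,p^c_\alpha p_s=p\big(1+(N-2)p^c_\alpha p_s\big)$.
\end{itemize}

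\emph{Stationary mean for constant exit probability.} For RS, SARS and CARS, $d_i\equiv d$ is constant. The balance equations then give
\begin{align*}
\pi_i=\pi_1(1-d)^{i-1},\qquad \pi_0+\frac{\pi_1}{d}=1,\qquad \pi_1=\pi_0P_{0\to1},
\end{align*}
and therefore
\begin{align*}
\overline{\text{AoII}}=\sum_{i\geqslant1} i\,\pi_i=\frac{\pi_1}{d^2}=\frac{P_{0\to1}}{d\,\big(d+P_{0\to1}\big)}.
\end{align*}
Substituting each policy's quantities recovers the three closed forms:
\begin{itemize}
\item RS: $d+P_{0\to1}=p^r_\alpha p_s+Np(1-p^r_\alpha p_s)$, which yields \eqref{AvgCE_RS}.
\item SARS: $d+P_{0\to1}=Np-(N-1)pq_{\alpha_1}p_s+(1-p)q_{\alpha_2}p_s$, which yields \eqref{AvgCE_SARS}.
\item CARS: $d+P_{0\to1}=p(N-p^c_\alpha p_s)$, which yields \eqref{AvgCE_CARS}.
\end{itemize}

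\emph{TARS.} Here the exit probability depends on the age. It equals $p$ for ages below the threshold, $p+(1-p)p^{th}_\alpha p_s$ at the threshold age $n-1$, and $p+(1-p)p_s$ for ages $n$ and above. So $\pi_i$ is geometric with ratio $1-p$ up to the threshold, then multiplied by the single factor $(1-p)(1-p^{th}_\alpha p_s)$, then geometric with ratio $(1-p)(1-p_s)$. I would split both the normalization $\sum_i\pi_i$ and the mean $\sum_i i\pi_i$ into a finite arithmetic–geometric sum below $n$ and an infinite tail, evaluate both in closed form, and take the ratio. This should produce the factors $(1-p)^n$ and $p+(1-p)p_s$ appearing in $F$ and $G$.

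The main obstacle is this TARS step, for two reasons. First, the indexing convention must be fixed: the sampling decision in slot $t$ uses $\text{AoII}(t-1)$, so I need to pin down exactly which age index receives $p^{th}_\alpha$ and check consistency with the stated $n$. Second, the algebra must be simplified into the specific polynomial form of $F(p^{th}_\alpha,n)$. I would first verify the chosen convention on a small case such as $n=1$ or $n=2$ against the formula, then carry out the general simplification.
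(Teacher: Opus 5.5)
Your route is the paper's: reduce each policy to a DTMC on $\text{AoII}(t)$, use that its stationary law is geometric (RS, SARS, CARS) or piecewise geometric (TARS), and sum $i\pi_i$. Your formula $\overline{\text{AoII}}=P_{0\to1}/\big(d(d+P_{0\to1})\big)$, with the listed $P_{0\to1}$ and $d$, correctly reproduces the RS, SARS and CARS expressions. It is in fact more explicit than the appendix, which only lists $\pi_i$. Your TARS chain is exactly the chain in the paper's AoII figure: sampling in slot $t$ is driven by $\text{AoII}(t-1)$, with probability $p^{th}_\alpha$ at age $n-1$ and always at ages $\geqslant n$. You would therefore reproduce the paper's $\pi_i$ and $\pi_0$.

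\textbf{Where the plan fails.} The last step, matching the stated $F(p^{th}_\alpha,n)$, cannot succeed. The paper's appendix only asserts $\sum_i i\pi_i=F/G$, and the asserted $F$ has a sign error, so no choice of indexing convention will make your computation agree with it. Write $D=p+(1-p)p_s$ and carry out your split of the sums:
\[
\sum_{i\geqslant1}i\pi_i=(N-1)p\,\pi_0\Big[\frac{1-(1+(n-1)p)(1-p)^{n-1}}{p^2}+(1-p^{th}_\alpha p_s)(1-p)^{n-1}\Big(\frac{n}{D}+\frac{(1-p)(1-p_s)}{D^2}\Big)\Big].
\]
Combined with $\pi_0$, this gives $G$ exactly as stated, but the numerator is
\[
F=(N-1)\Big[(1-p)D^2+(1-p)^n\Big(p^2(1-p^{th}_\alpha p_s)\big(nD+(1-p)(1-p_s)\big)-(1+(n-1)p)D^2\Big)\Big].
\]
Expanding, this is the stated $F$ with the final summand $+(1-p)p_s\big(1+(n-2)p-(n-1)p^2(1-p^{th}_\alpha)\big)$ replaced by $-(1-p)p_s(\cdots)$.

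Two quick checks confirm this:
\begin{itemize}
\item Take $n=1$, $p^{th}_\alpha=1$, $p_s=1$, i.e.\ sampling every slot over a perfect channel. The AoII is identically zero, and the corrected $F$ vanishes. The stated $F$ equals $2(N-1)(1-p)^3$, so the stated ratio is nonzero.
\item Take $N=2$, $p=\tfrac12$, $p_s=1$, $n=2$, $p^{th}_\alpha=0$. The chain lives on $\{0,1,2\}$ with $\pi=(4/7,2/7,1/7)$ and mean $4/7$. The corrected formula gives $4/7$, whereas the stated $F/G$ equals $1$.
\end{itemize}

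Your plan to check small cases is the right instinct. When the check fails, keep your convention, since it already reproduces $\pi_0$ and $G$. The conclusion to draw is that the lemma's $F$, not your chain, needs correcting.
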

	
	\begin{IEEEproof} 
		See Appendix \ref{Appendix_Lemma_CE}.
	\end{IEEEproof}
    \begin{remark}
 Using Lemma~\ref{theorem_InstError} and the expressions in \eqref{AvgCE_RS}–\eqref{AvgCE_CARS}, the average AoII under the RS policy can be expressed as a function of the time-averaged reconstruction error as follows:
        \begin{align}
            \label{AoII_PE_RS}
            \overline{\text{AoII}} = \frac{3\bar{E}^{\text{RS}}}{(N+1)\big[p+(1-p)p^{r}_{\alpha}p_{s}\big]}.
        \end{align}
  Furthermore, for the SARS policy, the expression in \eqref{AoII_PE_RS} can be written as follows:
        \begin{align}
            \label{AoII_PE_SARS}
            \overline{\text{AoII}} = \frac{3\bar{E}^{\text{SARS}}}{(N+1)\big[p+(1-p)q_{\alpha_{2}}p_{s}\big]}.
        \end{align}
   Moreover, we can write the expression in \eqref{AoII_PE_RS} under the CARS policy as follows:
           \begin{align}
            \label{AoII_PE_CARS}
            \overline{\text{AoII}} = \frac{3\bar{E}^{\text{CARS}}}{(N+1)p\big[1+(N-2)p^{c}_{\alpha}p_{s}\big]}.
        \end{align}
    \end{remark}
    \subsection{Cost of Actions under Uncertainty (CoAU)}
The cost of actions under uncertainty, arising from the receiver's inability to determine whether the reconstructed state accurately reflects the true source state, is a critical challenge in real-time remote monitoring systems, especially in the presence of high error rates or costly actuation actions. We have presented several joint sampling and transmission policies that efficiently utilize sampling and transmission resources to reduce system error. To further improve system performance, we now focus on the cost of actions under uncertainty to prevent the actuator from making incorrect decisions based on the reconstructed state at the receiver. To this end, we propose a \emph{randomized actuation} policy in which the actuator takes actions \emph{probabilistically} at each time slot $t$: if the receiver successfully decodes the transmitted information, an action is taken with probability $p_{c_1}$; otherwise, an action is taken with probability $p_{c_2}$. In this context, the system may be synced while the receiver is unable to correctly infer that the reconstructed state matches the true state of the source. This mismatch can arise when the system is in a synced state, but the sampling and transmission policy either prevents the transmitter from sending an update or the transmitted update fails due to channel errors, leaving the receiver uncertain about the system’s synced state. Therefore, an appropriate selection of the actuation probabilities $p_{c_{1}}$ and $p_{c_{2}}$ is essential to prevent incorrect actuation decisions. To capture the number of consecutive time slots during which the actuator performs incorrect actions, we define a cost function called the \emph{Cost of Actions under Uncertainty} (CoAU). \emph{The objective of this cost function is to ensure that the actuator takes correct actions while preventing incorrect actions when the receiver is uncertain about the correctness of the estimated state of the source}. Let $\Delta(t)$ denote the CoAU at time slot $t$. If $\Delta(t)=0$, the actuator performs a correct action at time slot $t$, whereas $\Delta(t)=i\geqslant 1$ indicates that from time slot $t-i+1$ to $t$, the actuator has performed incorrect actions consecutively. According to this metric, $\Delta(t)$ increases by one when the system is synced, i.e., $X(t)=\hat{X}(t)$, but the actuator does not perform an action, $c(t)=0$, or when the system is in an erroneous state, i.e., $X(t)\neq\hat{X}(t)$, but the actuator performs an action, $c(t)=1$. In all other cases, $\Delta(t)$ drops to $0$. We describe the evolution of the CoAU as follows:
    \begin{align}
    \label{CoIA}
        &\Delta(t)=\notag\\
        &\!\begin{cases}
            0,&\!\!\parbox[t]{9cm}{{$\big\{\alpha(t)\!=\!1, h(t)\!=\!1,X(t)\!=\!\hat{X}(t),c(t)\!=\!1\big\},\\ 
            \text{or} \big\{\alpha(t)\!=\!1, h(t)\!=\!0,X(t)\!=\!\hat{X}(t),c(t)\!=\!1\big\},\\
            \text{or} \big\{\alpha(t)\!=\!0,X(t)\!=\!\hat{X}(t),c(t)\!=\!1\big\},\\
            \text{or} \big\{X(t)\neq \hat{X}(t),c(t)=0$\big\},}}\\
            \Delta(t\!-\!1)\!+\!1,&\!\!\parbox[t]{9cm}{{$\big\{X(t)=\hat{X}(t),c(t)=0\big\},\\ \text{or} \{X(t)\neq \hat{X}(t),c(t)=1$\big\}.}}
        \end{cases}
    \end{align}
    \begin{lemma}
		\label{theorem_CoIA}
       When the information source is sampled using the RS policy, the probability $P^{\text{RS}}_{\Delta_{0}}=\mathbb{P}\big[\Delta(t)=0\big]$ is given by:
            \begin{align}
 \label{Delta_RS}
P^{\text{RS}}_{\Delta_{0}}&=\frac{(N-1)p(1-p^{r}_{\alpha}p_{s})}{Np+(1-Np)p^{r}_{\alpha}p_{s}}+p^{r}_{\alpha}p_{s}p_{c_{1}}\notag\\
           &+\frac{(1-p^{r}_{\alpha}p_{s})\big[2p-Np+(1-Np)p^{r}_{\alpha}p_{s}\big]p_{c_{2}}}{Np+(1-Np)p^{r}_{\alpha}p_{s}}.
        \end{align}
	\begin{IEEEproof} 
		See Appendix \ref{Appendix_ControlIncorrectAction}.
	\end{IEEEproof}
    	\end{lemma}
        Similar to Lemma \ref{theorem_CoIA}, one can obtain the probability $\mathbb{P}\big[\Delta(t)=0\big]$ when the source is sampled under the SARS and CARS policies as follows:
    \begin{align}
            P^{\text{SARS}}_{\Delta_{0}}&= \frac{(N-1)p(1-q_{\alpha_{1}}p_{s})}{Np+(1-N)pq_{\alpha_{1}}p_{s}+(1-p)q_{\alpha_{2}}p_{s}}\notag\\
            &+\frac{(N-1)pp_{s}\big[p(q_{\alpha_{1}}-q_{\alpha_{2}})+q_{\alpha_{2}}\big]p_{c_{1}}}{Np+(1-N)pq_{\alpha_{1}}p_{s}+(1-p)q_{\alpha_{2}}p_{s}}\notag\\
            &+\frac{p\big[2-N+(N-1)(1-p)q_{\alpha_{1}}p_{s}\big]p_{c_{2}}}{Np+(1-N)pq_{\alpha_{1}}p_{s}+(1-p)q_{\alpha_{2}}p_{s}}\notag\\
            &+\frac{(1-p)\big(1+(1-N)p\big)q_{\alpha_{2}}p_{s}p_{c_{2}}}{Np+(1-N)pq_{\alpha_{1}}p_{s}+(1-p)q_{\alpha_{2}}p_{s}},\label{PD0_SARS}\\
P^{\text{CARS}}_{\Delta_{0}}&=\frac{(N-1)(1-p^{c}_{\alpha}p_{s})}{N-p^{c}_{\alpha}p_{s}}+(N-1)pp^{c}_{\alpha}p_{s}p_{c_{1}}\notag\\
            &+\bigg[3-2N-(N-1)pp^{c}_{\alpha}p_{s}+\frac{2(N-1)^{2}}{N-p^{c}_{\alpha}p_{s}}\bigg]p_{c_{2}}.\label{PD0_CARS}
        \end{align}

    \section{Constrained Optimization Problems}
\par In this section, we formulate and solve three constrained optimization problems. In the first problem, presented in Section \ref{Optimizationproblem_AvgE}, we aim to find the optimal RS, SARS, and CARS sampling policies that minimize the time-averaged reconstruction error under a constraint on the time-averaged sampling frequency. In the second problem, presented in Section \ref{Optimizationproblem_AvgCE}, the objective is to find the optimal RS, CARS, SARS, and TARS sampling policies that minimize the average AoII, while ensuring that the time-averaged sampling frequency remains below a given threshold. In the third problem, presented in Section \ref{OptimizationProblem_CoIA}, the goal is to determine the optimal actuation probabilities that maximize the probability of correct actions at the actuator, subject to a constraint on the time-averaged actuation frequency.
    \subsection{Minimizing the time-averaged reconstruction error}
    \label{Optimizationproblem_AvgE}
    The objective of this optimization problem is to determine the optimal sampling probabilities $p^{r}_{\alpha}$, $p^{c}_{\alpha}$, $q_{\alpha_{1}}$, and $q_{\alpha_{2}}$ for the RS, CARS, and SARS policies, respectively, in order to minimize the time-averaged reconstruction error. We assume that $0<\eta\leqslant 1$ is a prescribed threshold on the time-averaged sampling frequency. The following optimization problem is then formulated for the RS policy.
     \begin{subequations}
		\label{OptProb_AvgE_RS}
		\begin{align}
			&\underset{p^{r}_{\alpha}}{\text{minimize}}\hspace{0.3cm}{\frac{p(N^{2}-1)(1-p^{r}_{\alpha}p_{\text{s}})}{3\big[p^{r}_{\alpha}p_{\text{s}}+Np(1-p^{r}_{\alpha}p_{\text{s}})\big]}}\label{OptProb_AvgE_RS_Obj}\\
			&\text{subject to}\hspace{0.2cm} \lim_{T \to \infty}\frac{1}{T}\mathbb{E}\Bigg[\sum_{t=1}^{T}\mathbbm{1}\{\alpha(t)=1\}\Bigg]\leqslant\eta,\label{OptProb_AvgE_RS_Const}\\
            &\hspace{1.8cm} 0\leqslant p^{r}_{\alpha} \leqslant 1,
		\end{align}
	\end{subequations}
    where $\mathbbm{1}$ denotes the indicator function and the constraint in \eqref{OptProb_AvgE_RS_Const} represents the time-averaged sampling frequency, which can be simplified as:
    \begin{align}
        \label{SamplingConst_RS}
         \lim_{T \to \infty}\frac{1}{T}\mathbb{E}\Bigg[\sum_{t=1}^{T} \mathbbm{1}\{\alpha(t)=1\}\Bigg]=  p^{r}_{\alpha}.
    \end{align}
Using \eqref{SamplingConst_RS}, the optimization problem in \eqref{OptProb_AvgE_RS} is simplified as:
	 \begin{subequations}
		\label{OptProb_AvgE_RS2}
		\begin{align}
			&\underset{p^{r}_{\alpha}}{\text{minimize}}\hspace{0.3cm}{\frac{p(N^{2}-1)(1-p^{r}_{\alpha}p_{\text{s}})}{3\big[p^{r}_{\alpha}p_{\text{s}}+Np(1-p^{r}_{\alpha}p_{\text{s}})\big]}}\label{OptProb_AvgE_RS_Obj2}\\
			&\text{subject to}\hspace{0.2cm} 0\leqslant p^{r}_{\alpha}\leqslant \eta. \label{OptProb_AvgE_RS_Const2}
		\end{align}
	\end{subequations}
To solve this optimization problem, we first note that the objective function in \eqref{OptProb_AvgE_RS_Obj2} decreases with $p^{r}_{\alpha}$. In other words, the objective function attains its minimum value when $p^{r}_{\alpha}$ is at its maximum. From the constraint in \eqref{OptProb_AvgE_RS_Const2}, the optimal sampling probability is $\eta$. Using \eqref{PE_SA_lemma} for the SARS policy, we formulate the corresponding optimization problem in \eqref{OptProb_AvgE_RS} as follows:
    \begin{subequations}
		\label{OptProb_AvgE_SARSC}
		\begin{align}
			&\underset{q_{\alpha_{1}},q_{\alpha_{2}}}{\text{minimize}}\hspace{0.3cm}{\frac{p(1-q_{\alpha_{1}}p_{s})(N^{2}-1)}{3\big[Np-(N-1)pq_{\alpha_{1}}p_{s}+(1-p)q_{\alpha_{2}}p_{s}\big]}}\label{OptProb_AvgE_SARS_Obj}\\
			&\text{subject to}\hspace{0.2cm} \lim_{T \to \infty}\frac{1}{T}\mathbb{E}\Bigg[\sum_{t=1}^{T} \mathbbm{1}\{\alpha(t)=1\}\Bigg]\leqslant\eta,\label{OptProb_AvgE_SARS_Const}\\
            &\hspace{1.8cm} q_{\alpha_{1}},q_{\alpha_{2}}\in[0,1], q_{\alpha_{2}}\geqslant q_{\alpha_{1}}.
		\end{align}
	\end{subequations}
    \begin{lemma}
	\label{theorem_Optimization_AvgE_SARS}
For an $N$-state DTMC information source, the time-averaged sampling frequency under the SARS policy can be expressed as:
            \begin{align}
 &\lim_{T \to \infty}\frac{1}{T}\mathbb{E}\Bigg[\sum_{t=1}^{T} \mathbbm{1}\{\alpha(t)=1\}\Bigg]\notag\\
 &=\frac{ p(N-1)\big[p(q_{\alpha_{1}}-q_{\alpha_{2}})+q_{\alpha_{2}}\big]}{Np-(N-1)pq_{\alpha_{1}}p_{s}+(1-p)q_{\alpha_{2}}p_{s}}.
        \end{align}
	\end{lemma}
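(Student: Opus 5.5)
The plan is to reduce the joint process $(X(t),\hat{X}(t))$ under the SARS policy to a two-state Markov chain that only tracks whether the system is synced, and then to write the sampling frequency as a stationary average of per-state sampling probabilities. The same reduction is presumably what underlies the SARS case of Lemma~\ref{theorem_InstError}. By the ergodic theorem for finite Markov chains, the limit $\lim_{T\to\infty}\frac{1}{T}\mathbb{E}\big[\sum_{t}\mathbbm{1}\{\alpha(t)=1\}\big]$ equals $\mathbb{P}[\alpha(t)=1]$ under stationarity. That probability can be obtained by conditioning on the state at time $t-1$.

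\textbf{Step 1 (the reduction).} Define $S(t)=\mathbbm{1}\{X(t)=\hat{X}(t)\}$. The transition probabilities of $X$ are symmetric: self-transition $q=1-(N-1)p$, and $p$ to each other state. Hence, given $S(t-1)$, the law of the event $\{X(t)=\hat{X}(t-1)\}$ does not depend on the particular values of $X(t-1)$ and $\hat{X}(t-1)$. The SARS decision depends only on $S(t-1)$ and on whether $X(t)\neq\hat{X}(t-1)$. Therefore $S(t)$ is itself a Markov chain.

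\textbf{Step 2 (transition probabilities).} From a synced state, the source changes with probability $(N-1)p$, and in that case the sample is taken and delivered with probability $q_{\alpha_1}p_s$. This gives
\begin{align*}
\mathbb{P}[S(t)=0\mid S(t-1)=1]=(N-1)p(1-q_{\alpha_1}p_s).
\end{align*}
From an erroneous state, resynchronization happens in one of two ways. Either the source moves to $\hat{X}(t-1)$, which has probability $p$, or it does not, which has probability $1-p$, and then a sample is taken and delivered with probability $q_{\alpha_2}p_s$. This gives
\begin{align*}
\mathbb{P}[S(t)=1\mid S(t-1)=0]=p+(1-p)q_{\alpha_2}p_s.
\end{align*}
The stationary distribution of this two-state chain is
\begin{align*}
\pi_1=\frac{p+(1-p)q_{\alpha_2}p_s}{D},\qquad \pi_0=\frac{(N-1)p(1-q_{\alpha_1}p_s)}{D},
\end{align*}
where $D=Np-(N-1)pq_{\alpha_1}p_s+(1-p)q_{\alpha_2}p_s$. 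This $D$ is exactly the denominator appearing in \eqref{PE_SA_lemma}.

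\textbf{Step 3 (sampling frequency).} Sampling occurs in only two situations. In a previously synced slot, the source must change and then a sample is generated with probability $q_{\alpha_1}$. In a previously erroneous slot, $X(t)\neq\hat{X}(t-1)$ must hold, which has probability $1-p$, and then a sample is generated with probability $q_{\alpha_2}$. Hence
\begin{align*}
\mathbb{P}[\alpha(t)=1]=\pi_1(N-1)pq_{\alpha_1}+\pi_0(1-p)q_{\alpha_2}.
\end{align*}
Substituting $\pi_1$ and $\pi_0$ and expanding the numerator, the cross terms $\pm(N-1)p(1-p)q_{\alpha_1}q_{\alpha_2}p_s$ cancel. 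The numerator then collapses to
\begin{align*}
(N-1)p\big[pq_{\alpha_1}+(1-p)q_{\alpha_2}\big]=(N-1)p\big[p(q_{\alpha_1}-q_{\alpha_2})+q_{\alpha_2}\big],
\end{align*}
which is the claimed expression.

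The main point needing care is Step 1: rigorously justifying that $S(t)$ is Markov. This follows from the symmetry of the source transitions, together with the fact that the SARS decision and the channel outcome depend only on $S(t-1)$ and on whether $X(t)\neq\hat{X}(t-1)$. Ergodicity also has to be checked, which holds whenever $p>0$. Once these are in place, the remaining work is algebraic.
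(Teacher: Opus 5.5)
Your proposal is correct and follows essentially the same route as the paper. The paper also conditions on whether the system was synced at $t-1$, writing the sampling frequency as $q_{\alpha_1}(N-1)p\cdot N\pi_{i,i}+q_{\alpha_2}(1-p)\cdot N(N-1)\pi_{i,j}$ and then substituting \eqref{pij_SARS}. The only difference is where the stationary probabilities come from: the paper reads them off the full $N^2$-state joint chain $(X(t),\hat{X}(t))$, stated via balance equations in the proof of Lemma~\ref{theorem_InstError}, whereas your lumpability argument derives them directly, and indeed $N\pi_{i,i}$ and $N(N-1)\pi_{i,j}$ coincide with your $\pi_1$ and $\pi_0$.
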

    \begin{IEEEproof} 
		See Appendix \ref{AvgSamplinCost_Proof}.
	\end{IEEEproof}
    Using Lemma \ref{theorem_Optimization_AvgE_SARS}, the optimization problem in \eqref{OptProb_AvgE_SARSC} is simplified as:
     \begin{subequations}
		\label{OptProb_AvgE_SARSC2}
		\begin{align}
			&\underset{q_{\alpha_{1}},q_{\alpha_{2}}}{\text{minimize}}\hspace{0.3cm}{\frac{p(1-q_{\alpha_{1}}p_{s})(N^{2}-1)}{3\big[Np-(N-1)pq_{\alpha_{1}}p_{s}+(1-p)q_{\alpha_{2}}p_{s}\big]}}\label{OptProb_AvgE_SARS_Obj2}\\
			&\text{subject to}\hspace{0.2cm} \frac{ p(N-1)\big[p(q_{\alpha_{1}}-q_{\alpha_{2}})+q_{\alpha_{2}}\big]}{Np-(N-1)pq_{\alpha_{1}}p_{s}+(1-p)q_{\alpha_{2}}p_{s}}\leqslant \eta,\label{OptProb_AvgE_SARS_Const2}\\&\hspace{1.8cm} q_{\alpha_{1}},q_{\alpha_{2}}\in[0,1], q_{\alpha_{2}}\geqslant q_{\alpha_{1}}.
		\end{align}
	\end{subequations}
    It can be shown that the objective function in \eqref{OptProb_AvgE_SARS_Obj2} is a decreasing function of $q_{\alpha_{1}}$ and $q_{\alpha_{2}}$. In other words, the objective function reaches its minimum value when both $q_{\alpha_{1}}$ and $q_{\alpha_{2}}$ are at their maximum. Now, using \eqref{OptProb_AvgE_SARS_Const2}, and considering that $0\leqslant q_{\alpha_{1}}\leqslant 1$, $0\leqslant q_{\alpha_{2}}\leqslant 1$, and $q_{\alpha_{2}}\geqslant q_{\alpha_{1}}$, the optimal sampling probabilities $q^{*}_{\alpha_{1}}$ and $q^{*}_{\alpha_{2}}$ can be obtained as follows: 
    \begin{itemize}

        \item When $\eta p_{s}\geqslant p(N-1)$: The optimal sampling probabilities are $q^{*}_{\alpha_{1}}=\min\bigg\{1,\frac{(1-p)\big(\eta p_{s}-p(N-1)\big)+\eta Np}{\eta (N-1)pp_{s}+p^2(N-1)}\bigg\}$ and $q^{*}_{\alpha_{2}}=1$.
        \item When $\eta p_{s}< p(N-1)$:  If $\eta N p\leqslant(1-p)(p(N-1)-\eta p_{s})$, the optimal sampling probabilities are $q^{*}_{\alpha_{1}}=0$ and $q^{*}_{\alpha_{2}}=\min\Big\{1,\frac{\eta Np}{(1-p)(p(N-1)-\eta p_{s})}\Big\}$. Otherwise, if $\eta N p>(1-p)(p(N-1)-\eta p_{s})$, the optimal sampling probabilities are $q^{*}_{\alpha_{1}}=\min\Big\{1,\frac{\eta Np-(1-p)\big(p(N-1)-\eta p_{s}\big)}{\eta (N-1)pp_{s}+p^{2}(N-1)}\Big\}$ and  $q^{*}_{\alpha_{2}}=1$.
    \end{itemize}
    Using \eqref{PE_CA_lemma}, we can formulate the corresponding optimization problem for the CARS policy as follows:
    \begin{subequations}
		\label{OptProb_AvgE_CARSC}
		\begin{align}
			&\underset{p^{c}_{\alpha}}{\text{minimize}}\hspace{0.3cm}\frac{(N^{2}-1)(1-p^{c}_{\alpha}p_{s})}{3\big[N-p^{c}_{\alpha}p_{s}\big]}\label{OptProb_AvgE_CARS_Obj}\\
			&\text{subject to}\hspace{0.2cm} \lim_{T \to \infty}\frac{1}{T}\mathbb{E}\Bigg[\sum_{t=1}^{T} \mathbbm{1}\{\alpha(t)=1\}\Bigg]\leqslant\eta,\label{OptProb_AvgE_CARS_Const}\\
             &\hspace{1.8cm} 0\leqslant p^{c}_{\alpha} \leqslant 1,
		\end{align}
	\end{subequations}
    \begin{lemma}
	\label{theorem_Optimization_AvgE_CARS}
For an $N$-state DTMC information source, the time-averaged sampling frequency under the CARS policy can be expressed as:
            \begin{align}
 \lim_{T \to \infty}\frac{1}{T}\mathbb{E}\Bigg[\sum_{t=1}^{T} \mathbbm{1}\{\alpha(t)=1\}\Bigg]= (N-1)pp^{c}_{\alpha}.
        \end{align}
	\end{lemma}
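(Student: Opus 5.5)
Under the CARS policy the sampling decision at slot $t$ depends only on the source transition between slots $t-1$ and $t$ and on an independent coin flip. It does not depend on the receiver state $\hat{X}(t-1)$ or on the channel. We therefore do not need the joint Markov chain $(X(t),\hat{X}(t))$ used for Lemma~\ref{theorem_InstError}. We can compute $\mathbb{P}[\alpha(t)=1]$ directly and then average over time.

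First, write $\mathbb{P}[\alpha(t)=1]=\mathbb{P}[X(t)\neq X(t-1)]\,p^{c}_{\alpha}$. This holds because, by definition of CARS, a sample is drawn with probability $p^{c}_{\alpha}$ exactly when $X(t)\neq X(t-1)$, and never otherwise. Next, compute $\mathbb{P}[X(t)\neq X(t-1)]$. For every current state $i$, the chain moves to each of the $N-1$ other states with probability $p$. Hence
\begin{align*}
\mathbb{P}[X(t)\neq X(t-1)]=\sum_{i}\mathbb{P}[X(t-1)=i]\,(N-1)p=(N-1)p.
\end{align*}
This holds for every $t\geqslant 1$ and for any initial distribution, so we do not even need to invoke the stationary distribution. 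That distribution is in any case uniform, $1/N$, by the symmetry of the transition matrix.

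Finally, use linearity of expectation:
\begin{align*}
\frac{1}{T}\mathbb{E}\Big[\sum_{t=1}^{T}\mathbbm{1}\{\alpha(t)=1\}\Big]=\frac{1}{T}\sum_{t=1}^{T}(N-1)pp^{c}_{\alpha}=(N-1)pp^{c}_{\alpha}.
\end{align*}
This expression is constant in $T$, so the limit is immediate. None of these steps is a real obstacle. The only point needing care is to state explicitly that the CARS sampling rule ignores $\hat{X}$. That is what makes $p_{s}$ drop out, in contrast with Lemma~\ref{theorem_Optimization_AvgE_SARS}, where sampling is state-dependent and the stationary synced/erroneous probabilities are required.
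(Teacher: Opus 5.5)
Your proof is correct and follows the same route as the paper. The paper factors out $p^{c}_{\alpha}$ by conditioning on $X(t)\neq X(t-1)$, then conditions on $X(t-1)=i$ to obtain $(N-1)p$. Your added observation, that this per-slot identity holds for every $t$ and any initial distribution so the time-average limit is immediate, slightly tightens the paper's argument, which writes the limit directly as a single-slot probability.
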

     \begin{IEEEproof} 
		See Appendix \ref{AvgSamplinCost_Proof}.
	\end{IEEEproof}
    Using Lemma \ref{theorem_Optimization_AvgE_CARS}, the optimization problem can be simplified as:
    \begin{subequations}
		\label{OptProb_AvgE_CARSC2}
		\begin{align}
			&\underset{p^{c}_{\alpha}}{\text{minimize}}\hspace{0.3cm}\frac{(N^{2}-1)(1-p^{c}_{\alpha}p_{s})}{3\big[N-p^{c}_{\alpha}p_{s}\big]}\label{OptProb_AvgE_CARS_Obj2}\\
			&\text{subject to}\hspace{0.2cm} (N-1)pp^{c}_{\alpha}\leqslant \eta,\label{OptProb_AvgE_CARS_Const2}\\
             &\hspace{1.8cm} 0\leqslant p^{c}_{\alpha} \leqslant 1,
		\end{align}
	\end{subequations}
    We note that the objective function in \eqref{OptProb_AvgE_CARS_Obj2} is a decreasing function of $p^{c}_{\alpha}$. In other words, to minimize the objective function, we must find the maximum value of $p^{c}_{\alpha}$ that satisfies the constraint in \eqref{OptProb_AvgE_CARS_Const2}. Based on the constraint given in \eqref{OptProb_AvgE_CARS_Const2}, the maximum feasible sampling probability $p^{c}_{\alpha}$ is $\min\Big\{1,\frac{\eta}{(N-1)p}\Big\}$.

\subsection{Minimizing the average AoII}
    \label{Optimizationproblem_AvgCE}
    \par The objective of this optimization problem is to determine the optimal TARS sampling policy and the corresponding sampling probabilities $p^{r}_{\alpha}$, $p^{c}_{\alpha}$, $q_{\alpha_{1}}$, and $q_{\alpha_{2}}$ for the RS, SARS, and CARS policies, respectively. The aim is to minimize the average AoII while satisfying a constraint on the time-averaged sampling frequency. Here, the parameter $0<\eta\leqslant 1$ denotes the prescribed threshold for the time-averaged sampling frequency. Focusing first on the TARS sampling policy, we formulate the optimization problem as follows: 
     \begin{subequations}
		\label{Optthreshold_CE}
		\begin{align}
			&\underset{p^{th}_{\alpha},n}{\text{minimize}}\hspace{0.1cm}\frac{F(p^{th}_{\alpha},n)}{G(p^{th}_{\alpha},n)}\label{Optthreshold_CE_Obj}\\
			&\text{subject to}\hspace{0.2cm} \lim_{T \to \infty}\frac{1}{T}\mathbb{E}\Bigg[\sum_{t=1}^{T}\mathbbm{1}\{\alpha(t)=1\}\Bigg]\leqslant\eta,\label{Optthreshold_CE_Const}\\
            &\hspace{1.8cm} 0\leqslant p^{th}_{\alpha} \leqslant 1, n\in\mathbb{N},
		\end{align}
	\end{subequations}
   where $F(p^{th}_{\alpha},n)$ and $G(p^{th}_{\alpha},n)$ were given in \eqref{FG_threhold}.
   \begin{lemma}
	\label{theorem_Optimization_AvgCost_THR}
For an $N$-state DTMC information source, the time-averaged sampling frequency under the TARS policy can be expressed as:
 \begin{align}
 \label{AvgCost_THR_Lemma}
         &\lim_{T \to \infty}\frac{1}{T}\mathbb{E}\Bigg[\sum_{t=1}^{T} \mathbbm{1}\{\alpha(t)=1\}\Bigg]\notag\\
         &\!=\!
         \begin{cases}
             \frac{\big[p(N-1)+\big(p-(Np-1)p_{s}\big)p^{th}_{\alpha}\big]}{Np+(1-p)p_{s}-(N-1)pp^{th}_{\alpha}p_{s}},\hspace{1.7cm} &n=1,\\
             \frac{(N-1)p(1-p)^{n-1}\big(1-(1-p^{th}_{\alpha})p\big)}{N(1-p)\big(p+(1-p)p_{s}\big)-(N-1)(1-p)^{n}\big(1-(1-p^{th}_{\alpha})p\big)p_{s}},&n\geqslant 2.
         \end{cases}
    \end{align}
	\end{lemma}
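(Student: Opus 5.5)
The plan is to model the AoII process under the TARS policy as a discrete-time Markov chain on $\{0,1,2,\dots\}$, compute its stationary distribution $\pi_k=\lim_{t\to\infty}\mathbb{P}[\text{AoII}(t)=k]$, and then write the time-averaged sampling frequency as the stationary expectation of the sampling indicator. The sampling decision at slot $t$ depends on the AoII value entering that slot, $k=\text{AoII}(t-1)$. The sampler samples with probability one if $k\geqslant n$, with probability $p^{th}_{\alpha}$ if $k=n-1$, and not at all otherwise. Hence
\begin{align*}
\lim_{T\to\infty}\frac{1}{T}\mathbb{E}\Big[\sum_{t=1}^{T}\mathbbm{1}\{\alpha(t)=1\}\Big]=p^{th}_{\alpha}\pi_{n-1}+\sum_{k\geqslant n}\pi_k .
\end{align*}
Ergodicity of the chain (for $p>0$ and $p_s>0$) justifies replacing the time average by this stationary expectation.

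Next I derive the one-step transitions, using the symmetric structure of the source.
\begin{itemize}
\item From a synced state ($k=0$), the source leaves $\hat{X}(t-1)$ with probability $(N-1)p$. The system becomes erroneous unless a sample is taken and decoded.
\item From an erroneous state ($k\geqslant1$), the source jumps back to $\hat{X}(t-1)$ with probability $p$, giving AoII $=0$. With probability $1-p$ the source is still different from $\hat{X}(t-1)$, and the AoII becomes $k+1$ unless a successful sample arrives, with probability $\alpha p_s$.
\end{itemize}

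For $n\geqslant2$ this gives the following balance relations:
\begin{align*}
\pi_1&=(N-1)p\,\pi_0, & \pi_{k+1}&=(1-p)\pi_k \quad (1\leqslant k\leqslant n-2),\\
\pi_n&=(1-p)(1-p^{th}_{\alpha}p_s)\pi_{n-1}, & \pi_{k+1}&=(1-p)(1-p_s)\pi_k \quad (k\geqslant n).
\end{align*}
Hence $\pi_{n-1}=(N-1)p(1-p)^{n-2}\pi_0$, and the tail is geometric with ratio $(1-p)(1-p_s)$. Its sum is
\begin{align*}
\sum_{k\geqslant n}\pi_k=\frac{(1-p)(1-p^{th}_{\alpha}p_s)\,\pi_{n-1}}{p+(1-p)p_s}.
\end{align*}
Normalizing, $\pi_0\big[1+(N-1)p\sum_{j=0}^{n-2}(1-p)^j\big]+\sum_{k\geqslant n}\pi_k=1$, where the finite geometric sum equals $\big(1-(1-p)^{n-1}\big)/p$. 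This yields $\pi_0$ in closed form. Substituting into $p^{th}_{\alpha}\pi_{n-1}+\sum_{k\geqslant n}\pi_k$ and multiplying through by $(1-p)\big(p+(1-p)p_s\big)$ should produce the factor $1-(1-p^{th}_{\alpha})p$ in both numerator and denominator. The term $N(1-p)\big(p+(1-p)p_s\big)$ should appear after cancelling the $(1-p)^{n-1}$ contributions.

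The case $n=1$ is handled separately, because the randomized sampling then acts already in the synced state and changes the $0\to1$ transition. There I use $\pi_1=(N-1)p(1-p^{th}_{\alpha}p_s)\pi_0$ with a geometric tail $\pi_{k+1}=(1-p)(1-p_s)\pi_k$ for $k\geqslant1$. The frequency is then $p^{th}_{\alpha}\pi_0+(1-\pi_0)$, and normalization plus simplification should give the first branch of \eqref{AvgCost_THR_Lemma}.

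The main obstacle is fixing the timing convention precisely so that it matches the evolution \eqref{AoII_definition} and the form of $F$ and $G$ in Lemma~\ref{theorem_ConsecutiveError}. The key question is whether a sample triggered at AoII level $n-1$ is counted even when the source has already returned to $\hat{X}(t-1)$. The factor $1-(1-p^{th}_{\alpha})p=p^{th}_{\alpha}+(1-p^{th}_{\alpha})(1-p)$ suggests that the sampling decision at level $n-1$ is taken before knowing whether the error persists. I would cross-check the convention against the denominator $G$ of Lemma~\ref{theorem_ConsecutiveError}, which shares the same normalization constant. The remaining algebra is routine.
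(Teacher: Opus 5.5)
Your proposal is correct and follows the paper's proof essentially verbatim. The paper also writes the frequency as $p^{th}_{\alpha}\pi_{n-1}+\sum_{i\geqslant n}\pi_i$ over the same AoII chain, with the same transitions out of states $n-1$ and $n$, the same geometric tail and the same $\pi_0$. The algebra closes as you anticipate, and $n=1$ needs the separate treatment you give because there $\pi_{n-1}=\pi_0$.

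Your timing convention is the one the paper's chain encodes: the decision at slot $t$ depends on $\text{AoII}(t-1)$, and the $p^{th}_{\alpha}$ sample is counted whether or not the source has returned to $\hat{X}(t-1)$. So the hedging in your last paragraph is unnecessary.
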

     \begin{IEEEproof} 
		See Appendix \ref{AvgSamplinCost_Proof}.
	\end{IEEEproof}
    The optimal value of $n$, denoted by $n^{*}$, is the minimum value of $n$ that satisfies the constraint in \eqref{Optthreshold_CE_Const}. To determine $n^{*}$, we first assume $p^{th}_{\alpha}=0$ and find the minimum value of $n$ that satisfies this constraint. Then, by substituting this optimal $n$ into \eqref{Optthreshold_CE_Const}, we obtain the optimal sampling probability that satisfies the optimization constraint. Using \eqref{AvgCost_THR_Lemma}, if $\frac{p(N-1)}{Np+(1-p)p_{s}}\leqslant \eta$, the optimal value of $n$ is $n^{*}=1$ and the optimal sampling probability is $p^{th^{*}}_{\alpha} = \min\bigg\{1,\frac{\eta\big(Np+(1-p)p_{s}\big)-p(N-1)}{p+\eta(N-1)pp_{s}-(Np-1)p_{s}}\bigg\}$. Otherwise, the optimal values of $n$ and $p^{th}_{\alpha}$  that satisfy the constraint of the optimization problem are given as follows:
    \begin{align}
        \label{optimal_n_pa}
n^{*}&=\left\lceil\frac{\log\Big(\frac{N(1-p)(p+(1-p)p_{s})\eta}{(N-1)p+(N-1)(1-p)p_{s}\eta}\Big)}{\log(1-p)}\right\rceil,\notag\\
p^{th^{*}}_{\alpha}\!\!\!&=\!\max\!\Bigg\{\!0, \frac{1-p}{p}\Bigg[\frac{\eta N (1\!-\!p)\big(p_{s}\!+\!(1\!-\!p_{s})p\big)}{(N\!-\!1)(1\!-\!p)^{n^{*}}\big(p\!+\!(1\!-\!p)\eta p_{s}\big)}\!-\!1\Bigg]\!\Bigg\}.
    \end{align}
Now, using \eqref{AvgCE_RS}, the optimization problem for the RS policy is formulated as follows:
\begin{subequations}
		\label{OptProb_AvgCE_RS}
		\begin{align}
			&\underset{p^{r}_{\alpha}}{\text{minimize}}\hspace{0.2cm}{\frac{(N\!-\!1)p(1\!-\!p^{r}_{\alpha}p_{s})}{\big[p\!+\!(1\!-\!p)p^{r}_{\alpha}p_{s}\big]\big[p^{r}_{\alpha}p_{s}\!+\!Np(1\!-\!p^{r}_{\alpha}p_{s})\big]}}\label{OptProb_AvgCE_RS_Obj}\\
			&\text{subject to}\hspace{0.2cm} \lim_{T \to \infty}\frac{1}{T}\mathbb{E}\Bigg[\sum_{t=1}^{T} \mathbbm{1}\{\alpha(t)=1\}\Bigg]\leqslant\eta,\label{OptProb_AvgCE_RS_Const},\\
            &\hspace{1.8cm} 0\leqslant p^{r}_{\alpha} \leqslant 1.
		\end{align}
	\end{subequations}
    Using \eqref{SamplingConst_RS}, the optimization problem in \eqref{OptProb_AvgCE_RS} is simplified as:
    \begin{subequations}
		\label{OptProb_AvgCE_RS2}
		\begin{align}
			&\underset{p^{r}_{\alpha}}{\text{minimize}}\hspace{0.2cm}{\frac{(N\!-\!1)p(1\!-\!p^{r}_{\alpha}p_{s})}{\big[p\!+\!(1\!-\!p)p^{r}_{\alpha}p_{s}\big]\big[p^{r}_{\alpha}p_{s}\!+\!Np(1\!-\!p^{r}_{\alpha}p_{s})\big]}}\label{OptProb_AvgCE_RS_Obj2}\\
			&\text{subject to}\hspace{0.2cm} 0\leqslant p^{r}_{\alpha}\leqslant \eta.\label{OptProb_AvgCE_RS_Const2}
		\end{align}
	\end{subequations}
    To solve this optimization problem, we note that the objective function in \eqref{OptProb_AvgCE_RS_Obj2} decreases as $p^{r}_{\alpha}$ increases. Therefore, the problem reduces to finding the maximum value of $p^{r}_{\alpha}$ that satisfies the constraint in \eqref{OptProb_AvgCE_RS_Const2}. Using this constraint, the optimal sampling probability is $\eta$. Furthermore, using \eqref{AvgCE_SARS}, we can  formulate the corresponding optimization problem for the SARS policy as follows:
     \begin{subequations}
		\label{OptProb_AvgCE_SARSC}
		\begin{align}
		&\underset{q_{\alpha_{1}},q_{\alpha_{2}}}{\text{minimize}}\hspace{0cm}{\frac{(N\!-\!1)p(1\!-\!q_{\alpha_{1}}p_{s})}{\big[p\!+\!(1\!-\!p)q_{\alpha_{2}}p_{s}\big]\!\big[\!Np\!+\!(1\!\!-\!\!N)pq_{\alpha_{1}}p_{s}\!\!+\!\!(1\!-\!p)q_{\alpha_{2}}p_{s}\!\big]}}\label{OptProb_AvgCE_SARS_Obj}\\
			&\text{subject to}\hspace{0.2cm} \lim_{T \to \infty}\frac{1}{T}\mathbb{E}\Bigg[\sum_{t=1}^{T} \mathbbm{1}\{\alpha(t)=1\}\Bigg]\leqslant\eta,\label{OptProb_AvgCE_SARS_Const}\\
            &\hspace{1.8cm} q_{\alpha_{1}},q_{\alpha_{2}}\in[0,1], q_{\alpha_{2}}\geqslant q_{\alpha_{1}}.
		\end{align}
	\end{subequations}
    Using Lemma~\ref{theorem_Optimization_AvgE_SARS}, the optimization problem in \eqref{OptProb_AvgCE_SARSC} is simplified as:
    \begin{subequations}
		\label{OptProb_AvgCE_SARSC2}
		\begin{align}
		&\underset{q_{\alpha_{1}},q_{\alpha_{2}}}{\text{minimize}}\hspace{0cm}{\frac{(N\!-\!1)p(1\!-\!q_{\alpha_{1}}p_{s})}{\big[p\!+\!(1\!-\!p)q_{\alpha_{2}}p_{s}\big]\!\big[\!Np\!+\!(1\!\!-\!\!N)pq_{\alpha_{1}}p_{s}\!\!+\!\!(1\!-\!p)q_{\alpha_{2}}p_{s}\!\big]}}\label{OptProb_AvgCE_SARS_Obj2}\\
			&\text{subject to}\hspace{0.2cm} \frac{ p(N-1)\big[p(q_{\alpha_{1}}-q_{\alpha_{2}})+q_{\alpha_{2}}\big]}{Np-(N-1)pq_{\alpha_{1}}p_{s}+(1-p)q_{\alpha_{2}}p_{s}}\leqslant \eta,\label{OptProb_AvgCE_SARS_Const2}\\
            &\hspace{1.8cm} q_{\alpha_{1}},q_{\alpha_{2}}\in[0,1], q_{\alpha_{2}}\geqslant q_{\alpha_{1}}.
		\end{align}
	\end{subequations}
   It can be shown that the objective function in \eqref{OptProb_AvgCE_SARS_Obj2} decreases with both $q_{\alpha_{1}}$ and $q_{\alpha_{2}}$. This means that the objective function reaches its minimum when $q_{\alpha_{1}}$ and $q_{\alpha_{2}}$ take their maximum values. Using \eqref{OptProb_AvgCE_SARS_Const2}, and noting that $0\leqslant q_{\alpha_{1}}\leqslant 1$, $0\leqslant q_{\alpha_{2}}\leqslant 1$, and $q_{\alpha_{2}}\geqslant q_{\alpha_{1}}$, the optimal sampling probabilities $q^{*}_{\alpha_{1}}$ and $q^{*}_{\alpha_{2}}$ can be calculated as follows: 
    \begin{itemize}
        \item When $\eta p_{s}\geqslant p(N-1)$: The optimal sampling probabilities are $q^{*}_{\alpha_{1}}=\min\bigg\{1,\frac{(1-p)\big(\eta p_{s}-p(N-1)\big)+\eta Np}{\eta (N-1)pp_{s}+p^2(N-1)}\bigg\}$ and $q^{*}_{\alpha_{2}}=1$.
        \item When $\eta p_{s}< p(N-1)$:  If $\eta N p\leqslant(1-p)\big(p(N-1)-\eta p_{s}\big)$, the optimal sampling probabilities are $q^{*}_{\alpha_{1}}=0$ and $q^{*}_{\alpha_{2}}=\min\Big\{1,\frac{\eta Np}{(1-p)(p(N-1)-\eta p_{s})}\Big\}$. Otherwise, if $\eta N p>(1-p)\big(p(N-1)-\eta p_{s}\big)$, the optimal sampling probabilities are $q^{*}_{\alpha_{1}}=\min\Big\{1,\frac{\eta Np-(1-p)\big(p(N-1)-\eta p_{s}\big)}{\eta (N-1)pp_{s}+p^{2}(N-1)}\Big\}$ and  $q^{*}_{\alpha_{2}}=1$.
    \end{itemize}
We now use \eqref{AvgCE_CARS} to formulate the corresponding optimization problem for the CARS policy as follows:
    \begin{subequations}
		\label{OptProb_AvgCE_CARSC}
		\begin{align}
			&\underset{p^{c}_{\alpha}}{\text{minimize}}\hspace{0.3cm}\frac{(N-1)(1-p^{c}_{\alpha}p_{s})}{p(N-p^{c}_{\alpha}p_{s})\big[1+(N-2)p^{c}_{\alpha}p_{s}\big]}\label{OptProb_AvgCE_CARS_Obj}\\
			&\text{subject to}\hspace{0.2cm} \lim_{T \to \infty}\frac{1}{T}\mathbb{E}\Bigg[\sum_{t=1}^{T}\mathbbm{1}\{\alpha(t)=1\}\Bigg]\leqslant\eta,\label{OptProb_AvgCE_CARS_Const}\\
            &\hspace{1.8cm} 0\leqslant p^{c}_{\alpha} \leqslant 1.
		\end{align}
	\end{subequations}
    Using Lemma~\ref{theorem_Optimization_AvgE_CARS}, the optimization problem in \eqref{OptProb_AvgCE_CARSC} can be simplified as:
     \begin{subequations}
		\label{OptProb_AvgCE_CARSC2}
		\begin{align}
			&\underset{p^{c}_{\alpha}}{\text{minimize}}\hspace{0.3cm}\frac{(N-1)(1-p^{c}_{\alpha}p_{s})}{p(N-p^{c}_{\alpha}p_{s})\big[1+(N-2)p^{c}_{\alpha}p_{s}\big]}\label{OptProb_AvgCE_CARS_Obj2}\\
			&\text{subject to}\hspace{0.2cm} (N-1)pp^{c}_{\alpha}\leqslant \eta,\label{OptProb_AvgCE_CARS_Const2}\\&\hspace{1.8cm} 0\leqslant p^{c}_{\alpha} \leqslant 1.
		\end{align}
	\end{subequations}
   The objective function in \eqref{OptProb_AvgCE_CARS_Obj2} decreases with $p^{c}_{\alpha}$. This means that to minimize the objective function, we need to find the maximum value of $p^{c}_{\alpha}$ that satisfies the constraint in \eqref{OptProb_AvgCE_CARS_Const2}. From this constraint, the maximum feasible sampling probability $p^{c}_{\alpha}$ is $\min\Big\{1, \frac{\eta}{(N-1)p}\Big\}$.
   
\subsection{Maximizing the probability of a correct actuation decision}
    \label{OptimizationProblem_CoIA}
  The objective of this optimization problem is to determine the optimal actuation probabilities $p_{c_1}$ and $p_{c_2}$ when the source is sampled using the RS, SARS, and CARS policies. This is achieved by maximizing the probability that the CoAU is zero, while ensuring that the time-averaged actuation frequency remains below a given threshold. We assume that the parameter $0<\mu\leqslant 1$ represents the threshold for the time-averaged actuation frequency. This optimization problem is formulated as follows:
    \begin{subequations}
		\label{Optimization_problem_PD0_RS}
		\begin{align}
			&\underset{p_{c_{1}},p_{c_{2}}}{\text{maximize}}\hspace{0.3cm}\mathbb{P}\big[\Delta(t)=0\big]\label{Optimization_prob_PD0_RS_objfunc}\\
			&\text{subject to}\hspace{0.2cm} \lim_{T \to \infty}\frac{1}{T}\mathbb{E}\Bigg[\sum_{t=1}^{T} \mathbbm{1}\{c(t)=1\} \Bigg]\leqslant\mu,\label{Optimization_prob_PD0_constraint1}\\
            &\hspace{1.8cm} p_{c_{1}},p_{c_{2}}\in[0,1].
		\end{align}
	\end{subequations}
  \begin{lemma}
	\label{theorem_Optimization_PD0_RS}
When the source is sampled using the RS policy, the time-averaged actuation frequency can be expressed as follows:
\begin{align}
            \label{AvgCost_CoIA_RS}
&\lim_{T \to \infty}\frac{1}{T}\mathbb{E}\Bigg[\sum_{t=1}^{T} \mathbbm{1}\{c(t)\!=\!1\}\Bigg]\!=\!p^{r}_{\alpha}p_{s}p_{c_{1}}\!\!+\!\!(1\!-\!p^{r}_{\alpha}p_{s})p_{c_{2}}.
        \end{align}
	\end{lemma}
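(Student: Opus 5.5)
The plan is to reduce the long-run fraction of actuation slots to a per-slot probability and evaluate that probability by conditioning on whether the receiver decodes a new sample.

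Under the randomized actuation policy, the actuation decision $c(t)$ depends only on whether the receiver successfully decodes a sample at slot $t$. If it does, $c(t)=1$ with probability $p_{c_1}$; otherwise $c(t)=1$ with probability $p_{c_2}$. This randomization is independent of the source state and of everything else. Hence
\begin{align}
\mathbb{P}\big[c(t)=1\big]=\mathbb{P}\big[\alpha(t)=1,h(t)=1\big]p_{c_1}+\big(1-\mathbb{P}\big[\alpha(t)=1,h(t)=1\big]\big)p_{c_2}.
\end{align}
The first step is therefore to compute the probability of a successful decoding event at slot $t$.

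Under the RS policy, $\alpha(t)=1$ with probability $p^{r}_{\alpha}$ in every slot, regardless of $X(t)$, $\hat{X}(t-1)$ or the past. The channel state $h(t)$ is independent of the sampling decision, with $\mathbb{P}[h(t)=1]=p_s$. So $\mathbb{P}[\alpha(t)=1,h(t)=1]=p^{r}_{\alpha}p_s$ for every $t$, and no stationary distribution of the joint chain $(X(t),\hat{X}(t))$ is needed. This is the key simplification compared with SARS or CARS, where the sampling probability depends on the state and one would have to weight by the stationary probabilities, as in Lemma~\ref{theorem_Optimization_AvgE_SARS}. Substituting gives $\mathbb{P}[c(t)=1]=p^{r}_{\alpha}p_sp_{c_1}+(1-p^{r}_{\alpha}p_s)p_{c_2}$ for all $t$.

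Finally, by linearity of expectation,
\begin{align}
\frac{1}{T}\mathbb{E}\Big[\sum_{t=1}^{T}\mathbbm{1}\{c(t)=1\}\Big]=\frac{1}{T}\sum_{t=1}^{T}\mathbb{P}\big[c(t)=1\big].
\end{align}
Every term in the sum equals the same constant, so the limit as $T\to\infty$ is that constant, which yields \eqref{AvgCost_CoIA_RS}; this mirrors the derivation of \eqref{SamplingConst_RS}. The only point needing care is that "the receiver successfully decodes" is exactly the event $\{\alpha(t)=1,h(t)=1\}$. In particular, a sample that is sent but fails ($h(t)=0$) falls in the $p_{c_2}$ branch. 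Beyond that, no real obstacle remains.
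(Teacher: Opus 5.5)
Your proof is correct, and it takes a more direct route than the paper's. The paper partitions each slot by both the synchronization status and the sampling/channel outcome, i.e., $\{X(t)=\hat{X}(t),\alpha(t)=0\}$, $\{X(t)=\hat{X}(t),\alpha(t)=1,h(t)=0\}$, $\{X(t)=\hat{X}(t),\alpha(t)=1,h(t)=1\}$ and $\{X(t)\neq\hat{X}(t)\}$. It then substitutes the stationary probabilities \eqref{Cond1_RS}--\eqref{Cond5_RS}, which are derived from the distribution \eqref{pij_RS} of the joint chain $(X(t),\hat{X}(t))$. The state-dependent factors cancel only at the end, because \eqref{Cond1_RS}, \eqref{Cond2_RS}, \eqref{Cond4_RS} and \eqref{Cond5_RS} sum to $1-p^{r}_{\alpha}p_{s}$.

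You instead condition only on the decoding event $\{\alpha(t)=1,h(t)=1\}$, which under RS is independent of the state. As a result you never need $\pi_{i,j}$, your identity holds exactly for every $t$ rather than only in steady state, and the Ces\`aro limit is trivial.

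The paper's bookkeeping has its own advantages. It reuses exactly the quantities already computed for $P^{\text{RS}}_{\Delta_{0}}$ in Lemma~\ref{theorem_CoIA}, where the sync status genuinely matters, and it follows one template for all three policies.

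Your decomposition, however, exposes a structural fact the paper never states. For any sampling policy with $h(t)$ independent of $\alpha(t)$, the actuation frequency equals $p_{s}\bar{\alpha}\,p_{c_{1}}+(1-p_{s}\bar{\alpha})\,p_{c_{2}}$, where $\bar{\alpha}$ is the time-averaged sampling frequency. Combined with Lemmas~\ref{theorem_Optimization_AvgE_SARS} and~\ref{theorem_Optimization_AvgE_CARS}, this gives Lemmas~\ref{theorem_Optimization_PD0_SARS} and~\ref{theorem_Optimization_PD0_CARS} immediately, and one can check it reproduces their coefficients exactly. For those policies $\mathbb{P}[\alpha(t)=1]$ varies with $t$, so you would rely on the existence of the sampling-frequency limit rather than on exact per-slot constancy.
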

     \begin{IEEEproof} 
		See Appendix \ref{Appendix_CoIA_TAvgSamplingCost}.
	\end{IEEEproof}  
Using \eqref{Delta_RS} and Lemma~\ref{theorem_Optimization_PD0_RS}, the optimal actuation probabilities $p^{*}_{c_{1}}$ and $p^{*}_{c_{2}}$ are obtained as follows:
    \begin{itemize}
        \item When $(1-Np)p^{r}_{\alpha}p_{s}\leqslant (N-2)p$: The optimal actuation probabilities are  $p^{*}_{c_{1}}=\min\Big\{1,\frac{\mu}{p^{r}_{\alpha}p_{s}}\Big\}$ and $p^{*}_{c_{2}}=0$.\\
        \item When $(1-Np)p^{r}_{\alpha}p_{s}> (N-2)p$: If $\mu \geqslant p^{r}_{\alpha}p_{s}$, the optimal actuation probabilities are $p^{*}_{c_{1}}=1$ and $p^{*}_{c_{2}}=\min\Big\{1,\frac{\mu-p^{r}_{\alpha}p_{s}}{1-p^{r}_{\alpha}p_{s}}\Big\}$. Otherwise, if $\mu <p^{r}_{\alpha}p_{s}$, $p^{*}_{c_{1}}=\min\Big\{1,\frac{\mu}{p^{r}_{\alpha}p_{s}}\Big\}$ and $p^{*}_{c_{2}}=0$.
    \end{itemize}
    \begin{lemma}
	\label{theorem_Optimization_PD0_SARS}
When the source is sampled using the SARS policy, the time-averaged actuation frequency is given by:
            \begin{align}
            \label{AvgCost_CoIA_SARS}
&\lim_{T \to \infty}\frac{1}{T}\mathbb{E}\Bigg[\sum_{t=1}^{T}\mathbbm{1}\{c(t)=1\}\Bigg]\notag\\
        &=\Bigg(\frac{(N-1)pp_{s}\big[p(q_{\alpha_{1}}-q_{\alpha_{2}})+q_{\alpha_{2}}\big]}{Np+(1-N)pq_{\alpha_{1}}p_{s}+(1-p)q_{\alpha_{2}}p_{s}}\Bigg)p_{c_{1}}\notag\\
        &+\Bigg(\frac{Np + (1 - N)p (1 + p)  q_{\alpha_{1}}p_{s}}{Np+(1-N)pq_{\alpha_{1}}p_{s}+(1-p)q_{\alpha_{2}}p_{s}}\notag\\
        &+\frac{(1-p)\big(1+(1-N)p\big)q_{\alpha_{2}}p_{s}}{Np+(1-N)pq_{\alpha_{1}}p_{s}+(1-p)q_{\alpha_{2}}p_{s}}\Bigg)p_{c_{2}}.
        \end{align}
	\end{lemma}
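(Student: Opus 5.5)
The plan is to reduce the actuation frequency to the sampling frequency already computed in Lemma~\ref{theorem_Optimization_AvgE_SARS}. Under the randomized actuation policy, the actuation decision $c(t)$ depends on the past only through the event $\mathcal{S}(t)=\{\alpha(t)=1,h(t)=1\}$, i.e., whether the receiver decoded a sample in slot $t$. Conditioned on this event, $c(t)$ is an independent Bernoulli variable with parameter $p_{c_1}$ on $\mathcal{S}(t)$ and $p_{c_2}$ on its complement. Hence
\begin{align*}
\mathbb{E}\big[\mathbbm{1}\{c(t)=1\}\big]=\mathbb{P}[\mathcal{S}(t)]\,p_{c_1}+\big(1-\mathbb{P}[\mathcal{S}(t)]\big)\,p_{c_2}.
\end{align*}

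Since the channel state $h(t)$ is independent of the sampling decision $\alpha(t)$, which depends only on $X(t)$, $X(t-1)$ and $\hat{X}(t-1)$, we have $\mathbb{P}[\mathcal{S}(t)]=p_s\,\mathbb{P}[\alpha(t)=1]$. The joint process $(X(t),\hat{X}(t))$ under SARS is a finite, irreducible, aperiodic Markov chain for $0<p<1$ and $p_s>0$; this is the chain already used in Appendix~\ref{Appendix_LemmaPij_Et} and Appendix~\ref{AvgSamplinCost_Proof}. By ergodicity, the Cesàro limit of $\mathbb{E}[\mathbbm{1}\{c(t)=1\}]$ equals the stationary value of the expression above.

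Moreover, the stationary value of $\mathbb{P}[\alpha(t)=1]$ is exactly the quantity in Lemma~\ref{theorem_Optimization_AvgE_SARS}. The coefficient of $p_{c_1}$ is therefore
\begin{align*}
\frac{(N-1)p\,p_s\big[p(q_{\alpha_1}-q_{\alpha_2})+q_{\alpha_2}\big]}{Np+(1-N)pq_{\alpha_1}p_s+(1-p)q_{\alpha_2}p_s},
\end{align*}
which matches the statement.

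The remaining step is purely algebraic: verify that one minus this coefficient equals the stated coefficient of $p_{c_2}$. Write $D=Np+(1-N)pq_{\alpha_1}p_s+(1-p)q_{\alpha_2}p_s$ for the common denominator. Expanding $D-(N-1)pp_s\big[p(q_{\alpha_1}-q_{\alpha_2})+q_{\alpha_2}\big]$ and collecting the $q_{\alpha_1}$ and $q_{\alpha_2}$ terms gives
\begin{align*}
Np+(1-N)p(1+p)q_{\alpha_1}p_s+(1-p)\big(1+(1-N)p\big)q_{\alpha_2}p_s .
\end{align*}
For the $q_{\alpha_2}$ terms, note that $(1-p)q_{\alpha_2}p_s-(N-1)p(1-p)q_{\alpha_2}p_s=(1-p)\big(1+(1-N)p\big)q_{\alpha_2}p_s$. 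This is the stated numerator.

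The only point needing care is the justification step rather than the algebra. One must confirm that the SARS sampling decision is measurable with respect to the chain state and is independent of $h(t)$. One must also confirm that the limiting time average coincides with the stationary probability. I expect this to follow directly from the ergodicity argument already invoked for Lemma~\ref{theorem_Optimization_AvgE_SARS}, so I would cite that appendix instead of re-deriving the stationary distribution.
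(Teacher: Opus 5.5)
Your proposal is correct, but it uses a coarser decomposition than the paper. The paper's proof in Appendix~\ref{Appendix_CoIA_TAvgSamplingCost} applies total probability over four events: $\{X(t)=\hat{X}(t),\alpha(t)=0\}$, $\{X(t)=\hat{X}(t),\alpha(t)=1,h(t)=0\}$, $\{X(t)=\hat{X}(t),\alpha(t)=1,h(t)=1\}$ and $\{X(t)\neq\hat{X}(t)\}$. It implicitly evaluates the SARS analogues of the joint probabilities \eqref{Cond1_RS}--\eqref{Cond5_RS} from the stationary distribution \eqref{pij_SARS}, then weights them by $p_{c_1}$ or $p_{c_2}$. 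You instead observe that the actuation rule ignores synchronization status, so only $\mathbb{P}[\alpha(t)=1,h(t)=1]=p_s\,\mathbb{P}[\alpha(t)=1]$ matters, and you import $\mathbb{P}[\alpha(t)=1]$ from Lemma~\ref{theorem_Optimization_AvgE_SARS}. Your closing identity is right: the $q_{\alpha_1}$ terms combine to $(1-N)p(1+p)q_{\alpha_1}p_s$ and the $q_{\alpha_2}$ terms to $(1-p)(1+(1-N)p)q_{\alpha_2}p_s$.

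Your route is shorter. It also explains why all three actuation-frequency lemmas share the form $P_S\,p_{c_1}+(1-P_S)\,p_{c_2}$, with $P_S$ equal to $p_s$ times the sampling frequency. The paper's finer partition is more work for this lemma alone, but it produces the sync-dependent joint probabilities needed anyway for $P^{\text{SARS}}_{\Delta_0}$ in \eqref{PD0_SARS}. That way the objective and the constraint come from one computation.

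You can also drop your irreducibility/aperiodicity claim, which fails in edge cases. For example, $q_{\alpha_1}=q_{\alpha_2}=0$ freezes $\hat{X}(t)$, and $q_{\alpha_1}p_s=1$ makes the erroneous states transient. Since $\mathbb{E}[\mathbbm{1}\{c(t)=1\}]=p_{c_2}+(p_{c_1}-p_{c_2})p_s\mathbb{P}[\alpha(t)=1]$ holds exactly in every slot, the Ces\`aro limit follows by linearity directly from Lemma~\ref{theorem_Optimization_AvgE_SARS}.
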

     \begin{IEEEproof} 
		See Appendix \ref{Appendix_CoIA_TAvgSamplingCost}.
	\end{IEEEproof}  

Using \eqref{PD0_SARS} and Lemma~\ref{theorem_Optimization_PD0_SARS}, the optimal actuation probabilities $p^{*}_{c_{1}}$ and $p^{*}_{c_{2}}$ are calculated as follows:
\begin{itemize}
        \item When $p(N-1)(1-p)q_{\alpha_{1}}p_{s}\leqslant p(N-2)-(1-p)\big(1+(1-N)p\big)q_{\alpha_{2}}p_{s}$: The optimal actuation probabilities are  $p^{*}_{c_{1}}=\min\Big\{1,\frac{\mu f(q_{\alpha_{1}},q_{\alpha_{2}})}{g(q_{\alpha_{1}},q_{\alpha_{2}})} \Big\}$ and $p^{*}_{c_{2}}=0$.
        \item When $p(N-1)(1-p)q_{\alpha_{1}}p_{s}> p(N-2)-(1-p)\big(1+(1-N)p\big)q_{\alpha_{2}}p_{s}$: If $(N-1)pp_{s}\big[p(q_{\alpha_{1}}-q_{\alpha_{2}})+q_{\alpha_{2}}\big]\leqslant \mu \big[Np+(1-N)pq_{\alpha_{1}}p_{s}+(1-p)q_{\alpha_{2}}p_{s}\big]$, the optimal actuation probabilities are $p^{*}_{c_{1}}=1$ and $p^{*}_{c_{2}}=\min\bigg\{1,\frac{\mu f(q_{\alpha_{1}},q_{\alpha_{2}})-g(q_{\alpha_{1}},q_{\alpha_{2}})}{Np+(1-N)p(1+p)q_{\alpha_{1}}p_{s}+(1-p)\big(1+(1-N)p\big)q_{\alpha_{2}}p_{s}}\bigg\}$. Otherwise, $p^{*}_{c_{1}}=\min\Big\{1,\frac{\mu f(q_{\alpha_{1}},q_{\alpha_{2}})}{g(q_{\alpha_{1}},q_{\alpha_{2}})} \Big\}$ and $p^{*}_{c_{2}}=0$. 
    \end{itemize}
    where $f(q_{\alpha_{1}},q_{\alpha_{2}})$ and $g(q_{\alpha_{1}},q_{\alpha_{2}})$ are given by:
    \begin{align}
        \label{fg}
        f\big(q_{\alpha_{1}},q_{\alpha_{2}}\big)&= Np+(1-N)pq_{\alpha_{1}}p_{s}+(1-p)q_{\alpha_{2}}p_{s},\notag\\
g\big(q_{\alpha_{1}},q_{\alpha_{2}}\big)&=(N-1)pp_{s}\big[p(q_{\alpha_{1}}-q_{\alpha_{2}})+q_{\alpha_{2}}\big].
    \end{align}
        \begin{lemma}
	\label{theorem_Optimization_PD0_CARS}
When the source is sampled using the CARS policy, the time-averaged actuation frequency can be expressed as:
            \begin{align}
            \label{AvgCost_CoIA_CARS}
&\lim_{T \to \infty}\frac{1}{T}\mathbb{E}\Bigg[\sum_{t=1}^{T} \mathbbm{1}\{c(t)=1\}\Bigg]\notag\\
        &=\big[(N-1)pp^{c}_{\alpha}p_{s}p_{c_{1}}+\big(1+(1-N)pp^{c}_{\alpha}p_{s}\big)p_{c_{2}}\big].
        \end{align}
	\end{lemma}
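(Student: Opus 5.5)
The plan is to write the actuation indicator as a mixture over the receiver's decoding status in slot $t$, exactly as was done for the RS policy in Lemma~\ref{theorem_Optimization_PD0_RS}. Under the randomized actuation policy, conditioned on the event $\{\alpha(t)=1,h(t)=1\}$ (successful decoding) the actuator acts with probability $p_{c_1}$, and otherwise with probability $p_{c_2}$. This randomization is independent of everything else. Hence
\begin{align*}
\mathbb{E}\big[\mathbbm{1}\{c(t)=1\}\big]=\mathbb{P}\big[\alpha(t)=1,h(t)=1\big]p_{c_{1}}+\Big(1-\mathbb{P}\big[\alpha(t)=1,h(t)=1\big]\Big)p_{c_{2}}.
\end{align*}
It therefore suffices to compute the success probability $\mathbb{P}[\alpha(t)=1,h(t)=1]$ and then take the Ces\`aro limit.

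Next I compute $\mathbb{P}[\alpha(t)=1]$ under CARS. By definition of CARS, sampling occurs only if $X(t)\neq X(t-1)$, and then with probability $p^{c}_{\alpha}$. For the symmetric $N$-state chain, the probability of leaving the current state is $(N-1)p$ from \emph{every} state. So $\mathbb{P}[X(t)\neq X(t-1)]=(N-1)p$ for all $t\geqslant 1$, irrespective of the initial distribution, and $\mathbb{P}[\alpha(t)=1]=(N-1)pp^{c}_{\alpha}$. This agrees with Lemma~\ref{theorem_Optimization_AvgE_CARS}, which I will simply invoke. Since the channel state $h(t)$ is independent of the sampling decision and of the source, $\mathbb{P}[\alpha(t)=1,h(t)=1]=(N-1)pp^{c}_{\alpha}p_{s}$.

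Substituting gives, for every $t$,
\begin{align*}
\mathbb{E}\big[\mathbbm{1}\{c(t)=1\}\big]=(N-1)pp^{c}_{\alpha}p_{s}p_{c_{1}}+\big(1+(1-N)pp^{c}_{\alpha}p_{s}\big)p_{c_{2}}.
\end{align*}
Because this expectation is constant in $t$, the time average $\frac{1}{T}\sum_{t=1}^{T}$ equals it for every $T$, and so does the limit, which yields \eqref{AvgCost_CoIA_CARS}. The only delicate point is justifying that the per-slot probability does not depend on the (possibly nonstationary) joint state $(X(t),\hat X(t))$. This is handled by the observation above that the change probability is state-independent, so no stationary-distribution argument for the joint chain is needed, unlike for SARS in Lemma~\ref{theorem_Optimization_PD0_SARS}.
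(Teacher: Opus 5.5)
Your proof is correct, and it takes a shorter route than the paper's. In Appendix~\ref{Appendix_CoIA_TAvgSamplingCost}, the paper conditions on a finer partition: $\{X(t)=\hat{X}(t),\alpha(t)=0\}$, $\{X(t)=\hat{X}(t),\alpha(t)=1,h(t)=0\}$, $\{X(t)=\hat{X}(t),\alpha(t)=1,h(t)=1\}$ and $\{X(t)\neq\hat{X}(t)\}$. It evaluates each piece from the stationary distribution of the joint chain $(X(t),\hat{X}(t))$. This is done explicitly for RS in \eqref{Cond1_RS}--\eqref{Cond5_RS}, and the CARS case is left to the analogous computation based on \eqref{pij_CARS}. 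Only after summing do the $p_{c_2}$-weighted terms collapse to $1-\mathbb{P}[\alpha(t)=1,h(t)=1]$.

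You build that collapse in from the start: a successful decoding forces $X(t)=\hat{X}(t)$, so the remaining events are exactly the complement of $\{\alpha(t)=1,h(t)=1\}$. You then use the fact that under CARS the sampling event depends only on the source's own transition, whose change probability $(N-1)p$ is the same from every state. This gives you an identity that holds slot by slot. The Ces\`aro limit is then immediate, and you never need stationarity or ergodicity of $(X,\hat{X})$. It also makes transparent why every policy yields the form $\mathbb{P}[\alpha(t)=1,h(t)=1]\,p_{c_1}+\big(1-\mathbb{P}[\alpha(t)=1,h(t)=1]\big)p_{c_2}$.

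The paper's finer decomposition buys reuse. The same synced/erroneous split of the undecoded slots is precisely what enters $P^{\text{CARS}}_{\Delta_0}$ in \eqref{PD0_CARS}, where $p_{c_2}$ and $1-p_{c_2}$ weight different pieces. Your shortcut therefore suffices for this lemma, but it would not by itself supply the ingredients for the correct-actuation probability.
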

     \begin{IEEEproof} 
		See Appendix \ref{Appendix_CoIA_TAvgSamplingCost}.
	\end{IEEEproof} 
    Using \eqref{PD0_CARS} and Lemma~\ref{theorem_Optimization_PD0_CARS}, the optimal actuation probabilities $p^{*}_{c_{1}}$ and $p^{*}_{c_{2}}$ are obtained as:
    \begin{itemize}
        \item When $(N-p^{c}_{\alpha}p_{s})\big[2N-3+(N-1)pp^{c}_{\alpha}p_{s}\big]\geqslant2(N-1)^{2}$: The optimal actuation probabilities are $p^{*}_{c_{2}}=0$ and $p^{*}_{c_{1}}=\min\Big\{1,\frac{\mu}{(N-1)pp^{c}_{\alpha}p_{s}}\Big\}$.
        \item When $(N-p^{c}_{\alpha}p_{s})\big[2N-3+(N-1)pp^{c}_{\alpha}p_{s}\big]<2(N-1)^{2}$: If $(N-1)pp^{c}_{\alpha}p_{s}<\mu$, the optimal actuation probabilities are $p^{*}_{c_{1}}=1$ and $p^{*}_{c_{2}}=\min\Big\{1,\frac{\mu-(N-1)pp^{c}_{\alpha}p_{s}}{1+(1-N)pp^{c}_{\alpha}p_{s}}\Big\}$. Otherwise, if $(N-1)pp^{c}_{\alpha}p_{s}\geqslant \mu$, $p^{*}_{c_{1}}=\min\Big\{1,\frac{\mu}{(N-1)pp^{c}_{\alpha}p_{s}}\Big\}$ and $p^{*}_{c_{2}}=0$.
    \end{itemize}
    \section{Numerical Results}
    In this section, we numerically evaluate and discuss the analytical results for a $N = 3$-state Markov chain information source. We assess and compare the average AoII under the optimal RS, CARS, SARS, and TARS policies. We then investigate the probability that the CoAU equals zero under the optimal randomized actuation policy and the always-act policy.
    \par \par Figs.~\ref{MinAvgAoII_q0.1} and \ref{MinAvgAoII_q0.8} show the minimum average AoII under the time-averaged sampling frequency constraint as a function of $\eta$ for selected values of $q$ and the success probability. We observe that for a rapidly changing source (i.e. smaller value of $q$), where the system is required to take more sampling actions to be in a synced state, the optimal SARS policy achieves superior performance compared to the other policies. In fact, the main advantage of the optimal SARS policy is that it maintains a low average AoII while avoiding unnecessary sampling actions. Under the SARS policy, sampling at time slot $t$ is performed \emph{probabilistically} only when the system is in an erroneous state, i.e., when $X(t)\neq \hat{X}(t-1)$.  \emph{In this policy, more sampling actions are performed when the system remains in an erroneous state for several consecutive time slots than when the system was synced in the previous slot but becomes erroneous in the current slot, i.e., $q_{\alpha_{2}}\geqslant q_{\alpha_{1}}$. This adaptive behavior reduces the average AoII and improves overall system performance}.  For example, Fig.~\ref{MinAvgAoII_q0.1ps0.1} illustrates that, for a target average AoII of $1.3$, the optimal RS policy reduces the sampling action from $0.70$ to $0.67$, which corresponds to a $4.2\%$ reduction compared to the optimal CARS policy. The optimal TARS policy further reduces the sampling action to $0.48$, achieving a $31.4\%$ reduction relative to the optimal CARS policy. In contrast, the optimal SARS policy requires $57.1\%$ fewer sampling actions (to $0.3$) than the optimal CARS policy. \emph{This ability to balance sampling frequency is important for improving performance in dynamic systems and makes the SARS policy suitable for resource-constrained environments such as energy-limited wireless networks}. In contrast, when the source remains in the same state with high probability and the allowed time-averaged sampling frequency is small (i.e. larger value of $q$ and smaller value of $\eta$), the optimal TARS policy performs better than the other policies. However, when the sampling frequency budget is large, the optimal SARS policy achieves better performance. This is because, under a strict sampling frequency constraint, the optimal SARS policy samples with very low probability even when the system is in an incorrect state. In this case, the TARS policy becomes more effective since it triggers sampling only when the AoII becomes large enough to significantly affect system performance. 
    \begin{figure}[!t]
		\centering
 
		\subfigure[$p_{{\text{s}}} = 0.1$ ]{\includegraphics[trim=0.5cm 0.05cm 1.1cm 0.6cm,width=0.48\linewidth, clip]{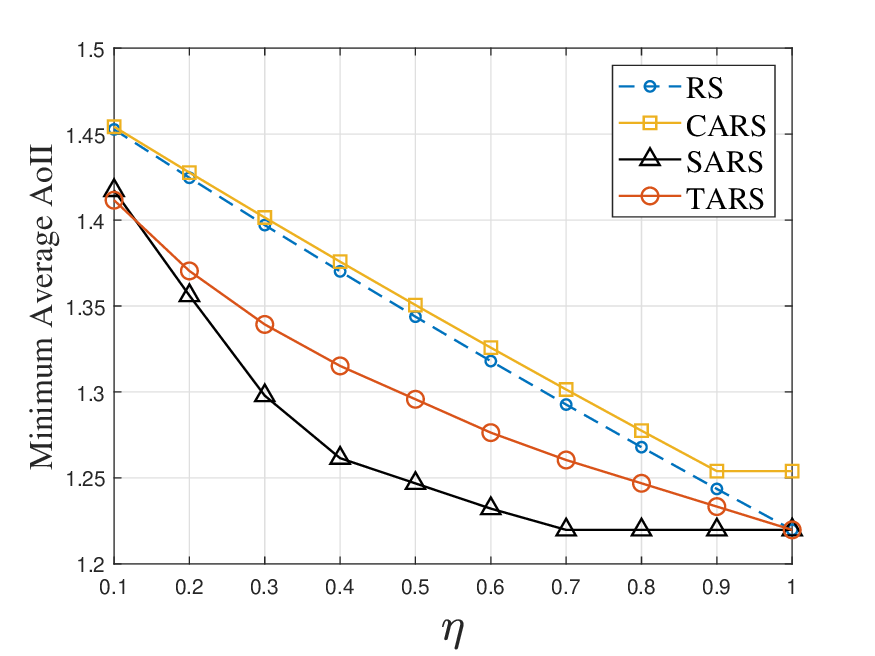}
  \label{MinAvgAoII_q0.1ps0.1}
		}
		\subfigure[$p_{{\text{s}}} = 0.9$]{\centering
			\includegraphics[trim=0.5cm 0.05cm 1.1cm 0.6cm,width=0.48\linewidth, clip]{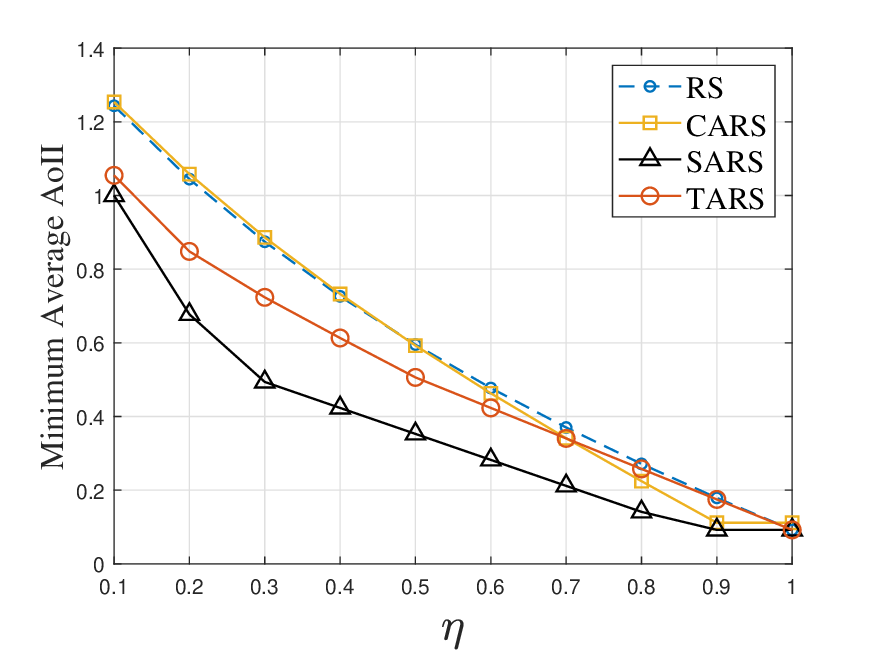}
		}
		\caption{Minimum average AoII as a function of $\eta$ for $q = 0.1$.}
		\label{MinAvgAoII_q0.1}
	\end{figure}
       \begin{figure}[!t]
		\centering
 
		\subfigure[$p_{{\text{s}}} = 0.1$ ]{\includegraphics[trim=0.5cm 0.05cm 1.1cm 0.6cm,width=0.48\linewidth, clip]{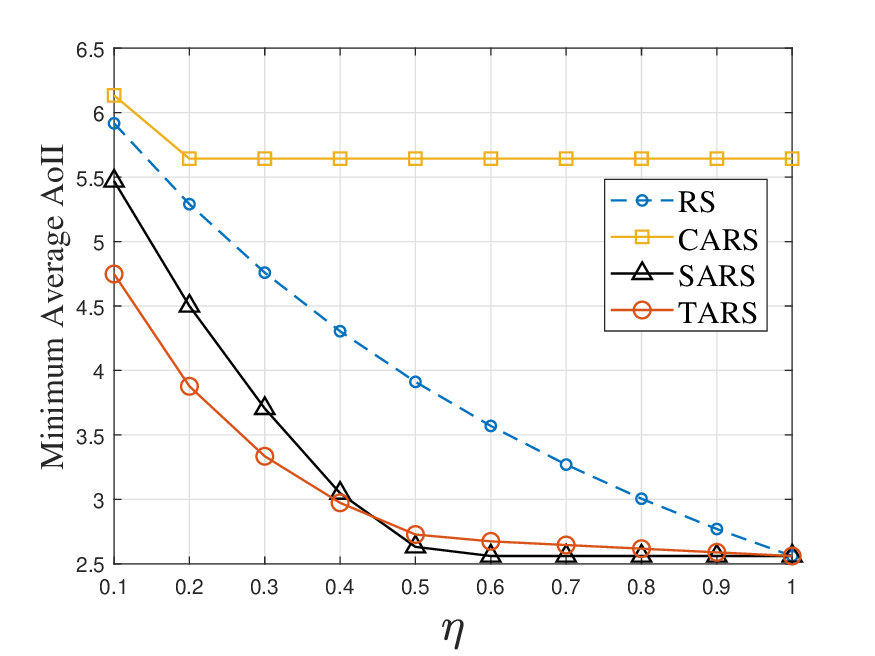}
  \label{MinAvgAoII_q0.8ps0.1}
		}
		\subfigure[$p_{{\text{s}}} = 0.9$]{\centering
			\includegraphics[trim=0.5cm 0.05cm 1.1cm 0.6cm,width=0.48\linewidth, clip]{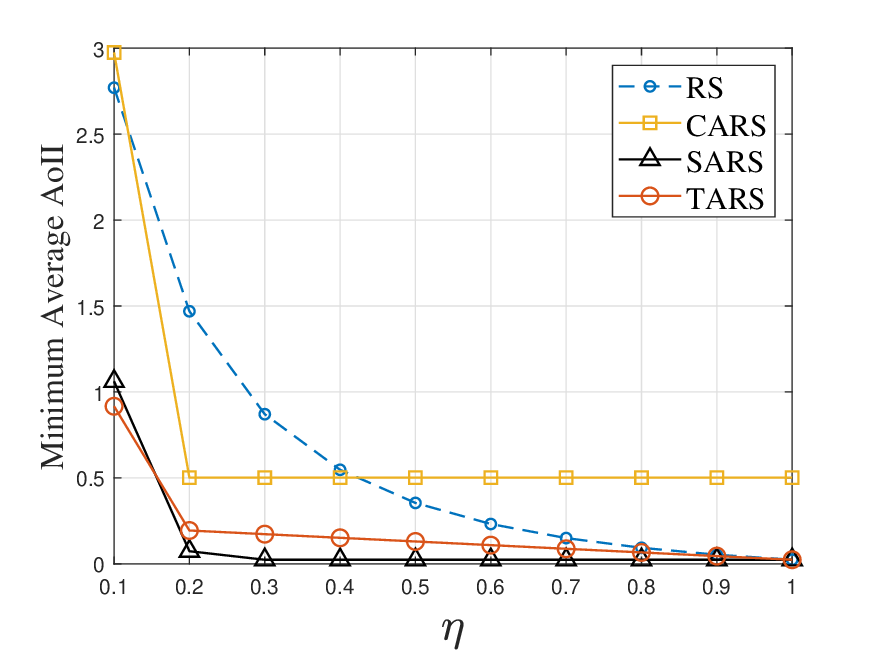}
		}
		\caption{Minimum average AoII as a function of $\eta$ for $q = 0.8$.}
		\label{MinAvgAoII_q0.8}
	\end{figure}
\par Figs.~\ref{TimeAvgSamplingAction_q0.1} and \ref{TimeAvgSamplingAction_q0.8} show the time-averaged sampling frequency as a function of $\eta$ for selected values of $q$ and the success probability. The results indicate that the optimal RS, CARS, SARS, and TARS policies generate samples in a way that ensures the time-averaged sampling frequency does not exceed the threshold $\eta$. Among these policies, the optimal SARS policy can reduce the number of sampling actions by selecting a larger value for $q_{\alpha_{2}}$ and a smaller value for $q_{\alpha_{1}}$, especially when the source changes rapidly. This is because not sampling when the system remains in an erroneous state for several consecutive time slots has a greater detrimental impact on performance than not sampling when the system was synced in the previous slot but becomes erroneous in the current slot. Therefore, it is reasonable to set $q_{\alpha_{2}}$ higher than $q_{\alpha_{1}}$.
    \begin{figure}[!t]
		\centering
 
		\subfigure[$p_{{\text{s}}} = 0.1$ ]{\includegraphics[trim=0.5cm 0.05cm 1.1cm 0.6cm,width=0.48\linewidth, clip]{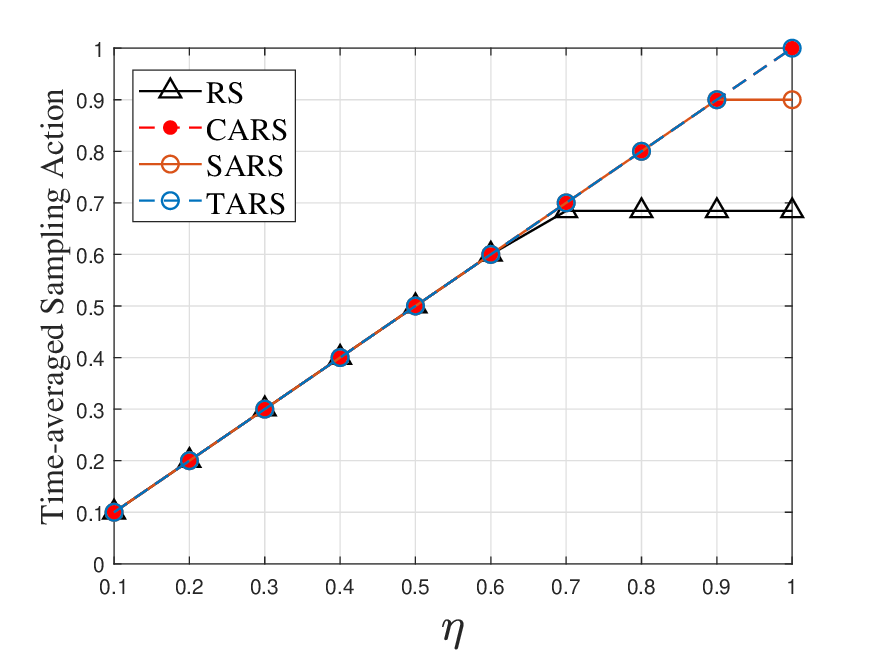}
  \label{TimeAvgSamplingAction_q0.1ps0.1}
		}
		\subfigure[$p_{{\text{s}}} = 0.9$]{\centering
			\includegraphics[trim=0.5cm 0.05cm 1.1cm 0.6cm,width=0.48\linewidth, clip]{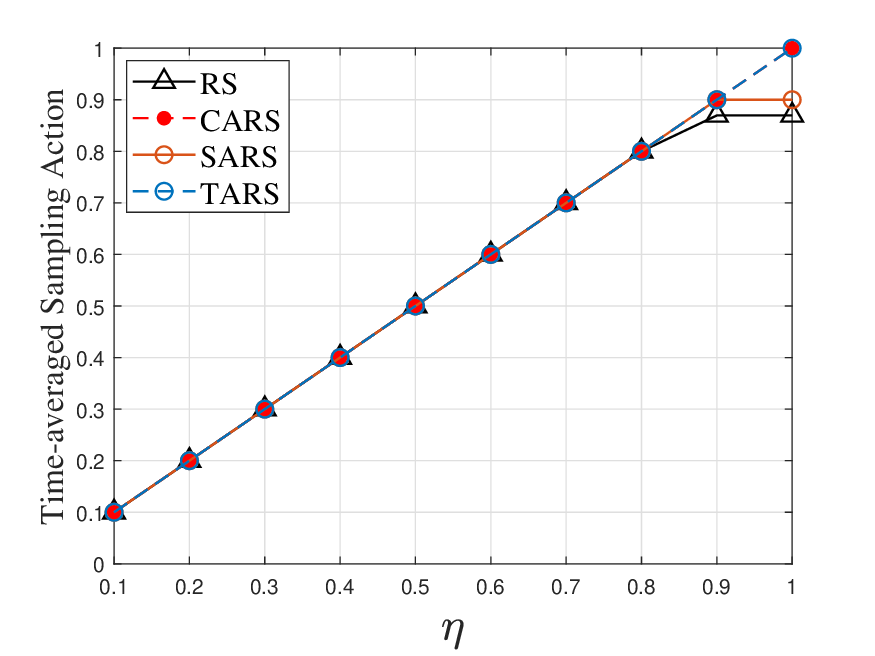}
		}
		\caption{Time-averaged sampling frequency as a function of $\eta$ for $q = 0.1$.}
		\label{TimeAvgSamplingAction_q0.1}
	\end{figure}
       \begin{figure}[!t]
		\centering
 
		\subfigure[$p_{{\text{s}}} = 0.1$ ]{\includegraphics[trim=0.5cm 0.05cm 1.1cm 0.6cm,width=0.48\linewidth, clip]{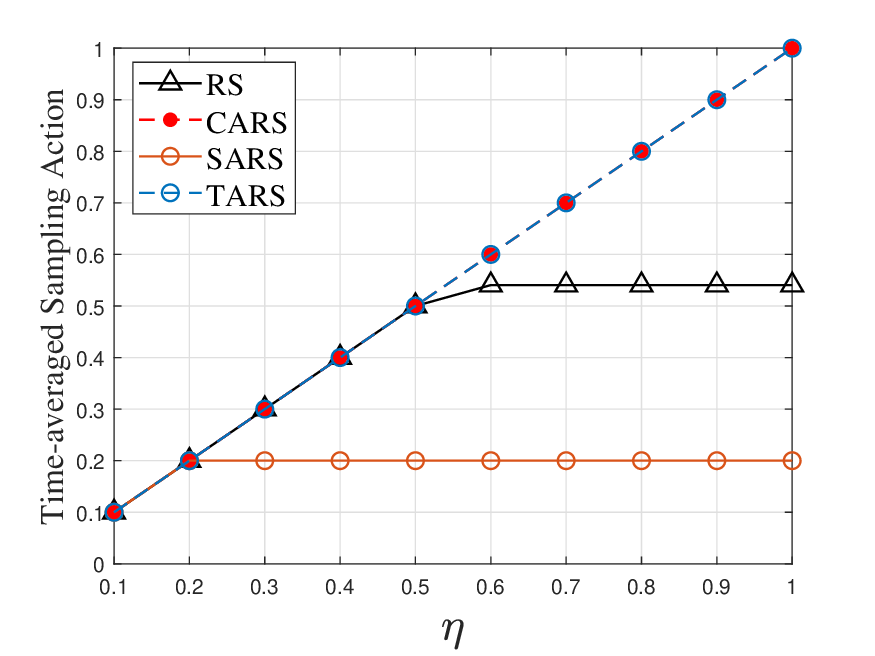}
  \label{TimeAvgSamplingAction_q0.8ps0.1}
		}
		\subfigure[$p_{{\text{s}}} = 0.9$]{\centering
			\includegraphics[trim=0.5cm 0.05cm 1.1cm 0.6cm,width=0.48\linewidth, clip]{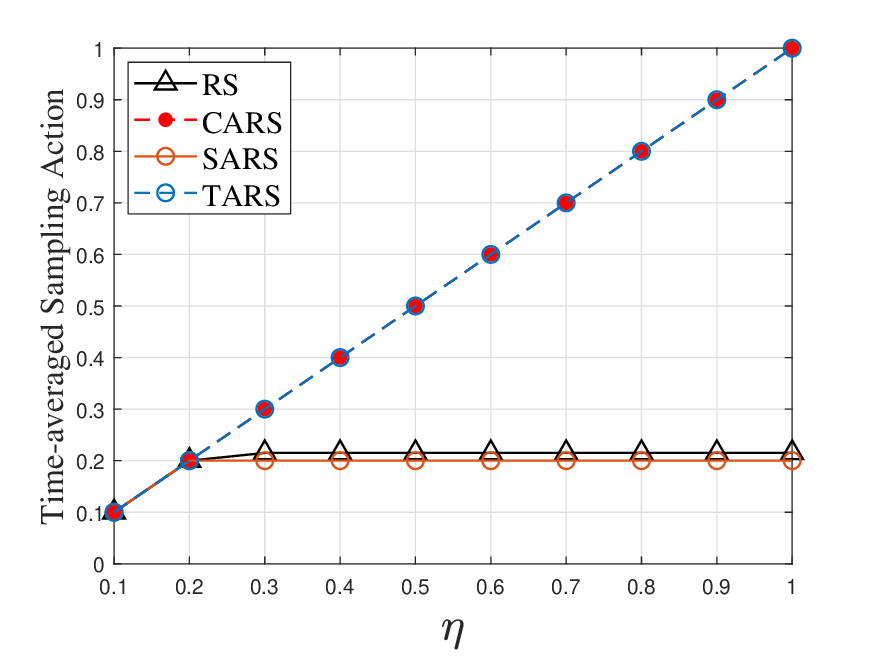}
		}
		\caption{Time-averaged sampling frequency as a function of $\eta$ for $q = 0.8$.}
		\label{TimeAvgSamplingAction_q0.8}
	\end{figure}
     \par Figs.~\ref{PD0RS}, \ref{PD0SARS}, and \ref{PD0CARS} illustrate the probability that the CoAU equals zero under the optimal randomized actuation policy and the always-act policy as a function of $q$, and for selected values of $\eta$ and the success probability. To facilitate a comparison with the always-act policy, we set $\mu=1$. Since the always-act policy performs an actuation in every time slot, its time-averaged actuation frequency is equal to one, making it infeasible for any $\mu<1$. In these figures, $P_{\Delta_{0}}$ under the always-act policy is obtained by setting $p_{c_{1}} = p_{c_{2}} = 1$. For all policies, the sampling and transmission processes use their respective optimal probabilities. The results indicate that when $q$ is small, the probability of taking a correct actuation decision becomes lower. This is because a small $q$ represents a rapidly changing source, which requires more frequent sampling. In this case, the sampling probabilities in the RS, CARS, and SARS policies decrease, making it more likely that the system stays in an erroneous state. By choosing optimal values for $p_{c_{1}}$ and $p_{c_{2}}$, the system can avoid incorrect actuation decisions with higher probability, which increases the number of correct actuation decisions. For example, when $q=0.1$ and $\eta=0.1$, which indicate a rapidly changing source with low sampling and transmission rates, the probability of correct actuation decisions increases from $0.34$ to $0.67$ for $p_{s}=0.1$. This corresponds to a $97\%$ increase in correct actuation decisions compared to always-act actuation. Furthermore, for $p_{s}=0.9$, the probability of correct actuation decisions increases by $86.8\%$ (from $0.38$ to $0.71$) under the optimal RS and CARS policies, while under the optimal SARS policy it increases by $72.5\%$ (from $0.40$ to $0.69$). \emph{It is worth noting that this improvement in correct actuation decisions results from both performing correct actuation decisions and avoiding incorrect actuation decisions when the receiver is uncertain about the correctness of the reconstructed source state}. In addition, increasing $\eta$ leads to a higher number of correct actuation decisions. Since $\eta$ represents the total time-averaged sampling frequency, a larger $\eta$ allows higher sampling and transmission probabilities, which improves the accuracy of the system’s actuation decisions.
       \begin{figure}[!t]
		\centering
 
		\subfigure[$\eta = 0.1$ ]{\includegraphics[trim=0.3cm 0.05cm 1.1cm 0.6cm,width=0.48\linewidth, clip]{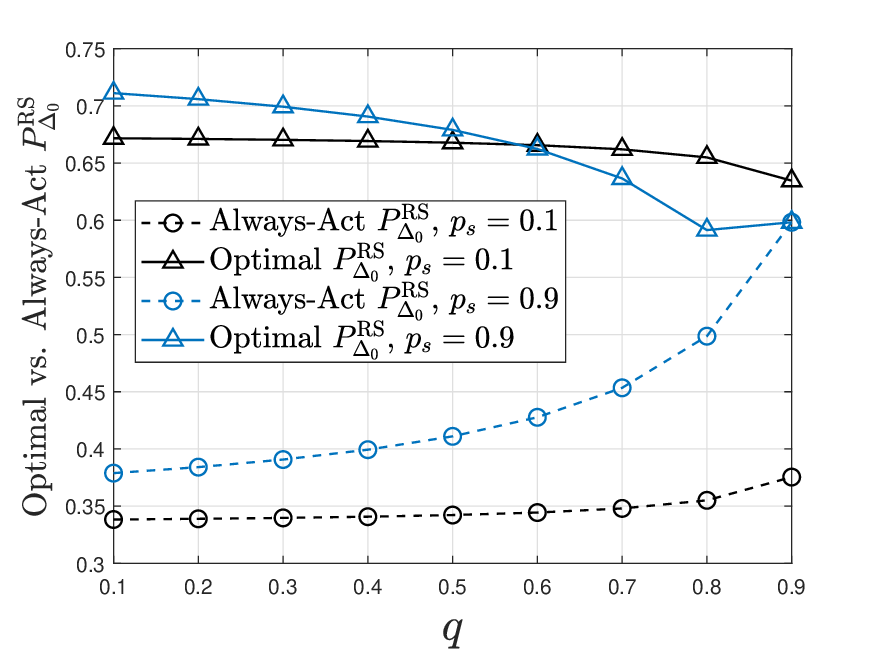}
  \label{PD0RS_eta0.1}
		}
		\subfigure[$\eta = 0.7$]{\centering
			\includegraphics[trim=0.3cm 0.05cm 1.1cm 0.6cm,width=0.48\linewidth, clip]{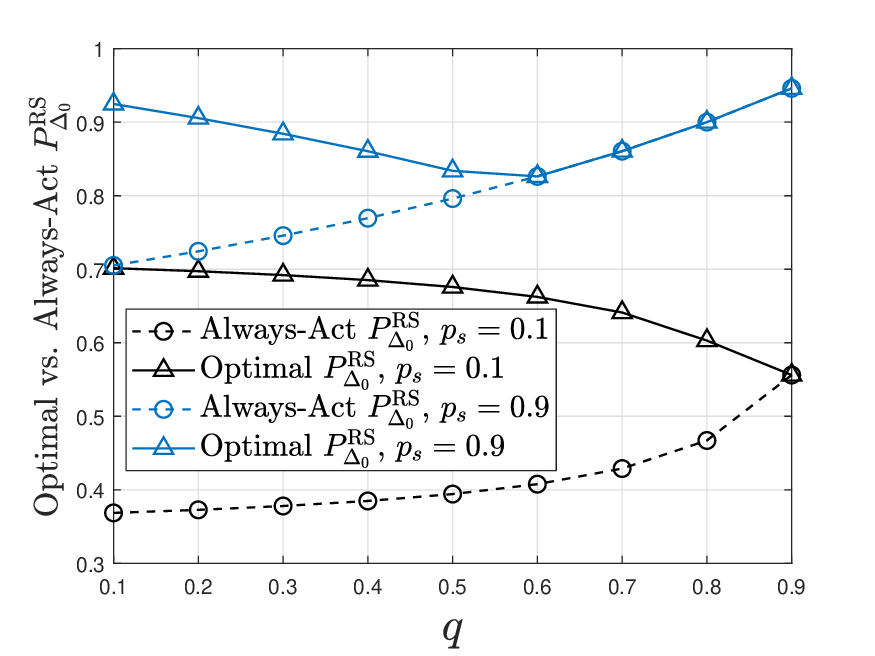}
		}
		\caption{Comparison between the optimal and Always-Act
$P_{\Delta_{0}}$ when the source is sampled using the optimal RS policy as a function of $q$ for $\mu = 1$.}
		\label{PD0RS}
	\end{figure}
            \begin{figure}[!t]
		\centering
 
		\subfigure[$\eta = 0.1$ ]{\includegraphics[trim=0.3cm 0.05cm 1.1cm 0.6cm,width=0.48\linewidth, clip]{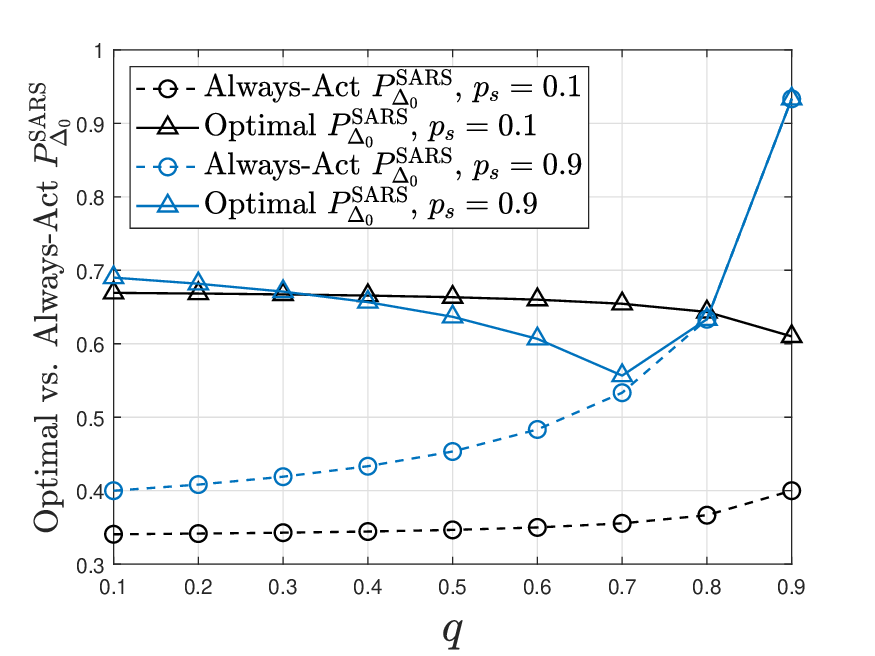}
  \label{PD0SARS_eta0.1}
		}
		\subfigure[$\eta = 0.7$]{\centering
			\includegraphics[trim=0.3cm 0.05cm 1.1cm 0.6cm,width=0.48\linewidth, clip]{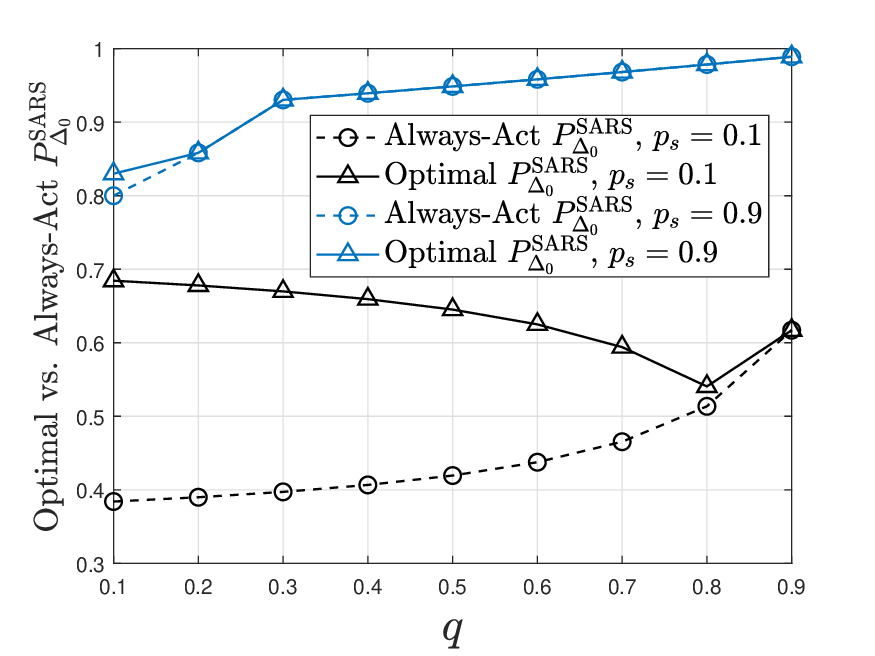}
		}
		\caption{Comparison between the optimal and Always-Act
$P_{\Delta_{0}}$ when the source is sampled using the optimal SARS policy as a function of $q$ for $\mu = 1$.}
		\label{PD0SARS}
	\end{figure}
     \begin{figure}[!t]
		\centering
 
		\subfigure[$\eta = 0.1$ ]{\includegraphics[trim=0.3cm 0.05cm 1.1cm 0.6cm,width=0.48\linewidth, clip]{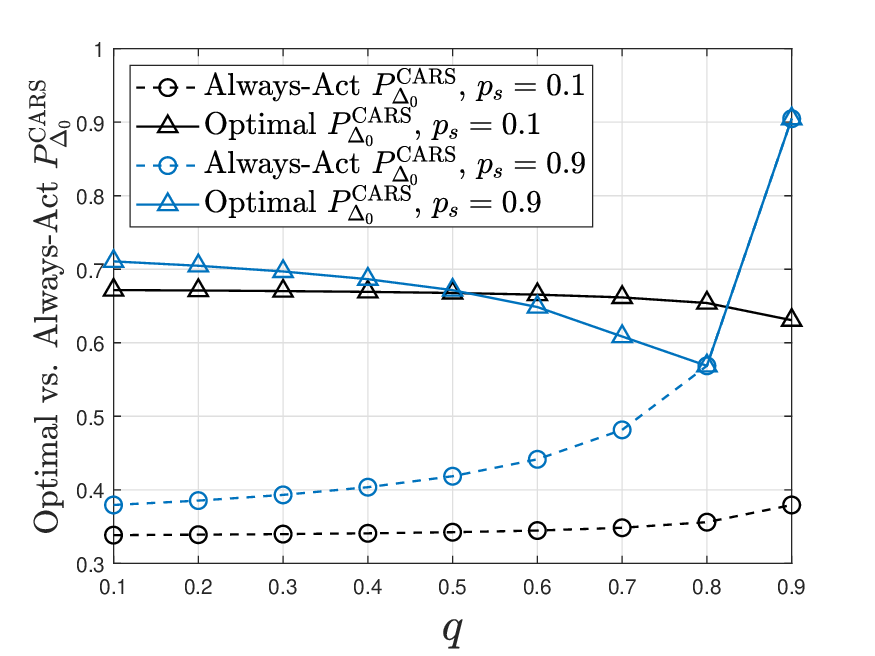}
  \label{PD0CARS_eta0.1}
		}
		\subfigure[$\eta = 0.7$]{\centering
			\includegraphics[trim=0.3cm 0.05cm 1.1cm 0.6cm,width=0.48\linewidth, clip]{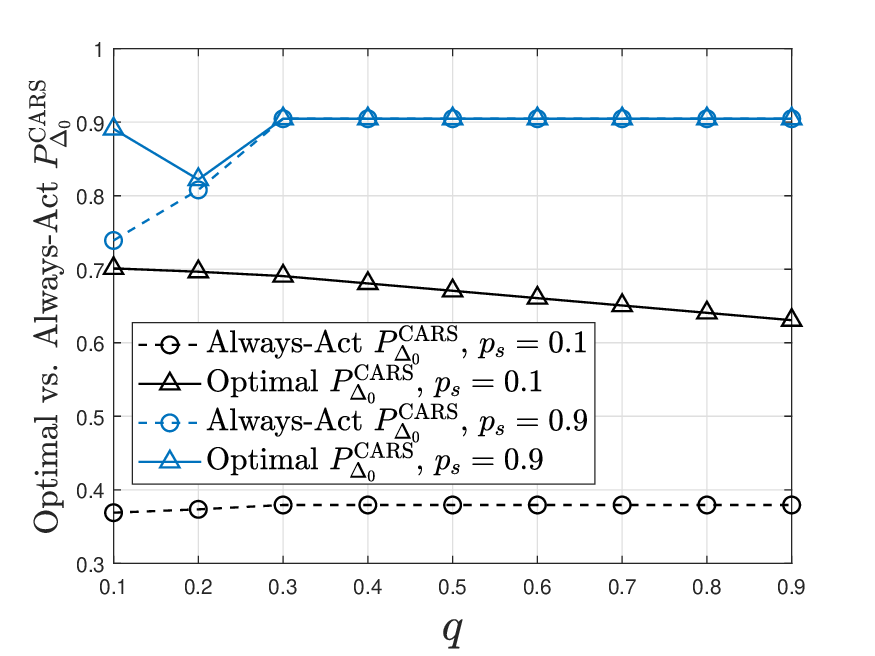}
		}
		\caption{Comparison between the optimal and Always-Act
$P_{\Delta_{0}}$ when the source is sampled using the optimal CARS policy as a function of $q$ for $\mu = 1$.}
		\label{PD0CARS}
	\end{figure}
    \section{Conclusion}
   We studied real-time monitoring of an $N$-state Markov process in a time-slotted system and derived closed-form expressions for time-averaged reconstruction error and the average AoII under several sampling and transmission policies. We also proposed a randomized actuation policy and derived the probability of correct actuation decisions. Three rate-constrained optimization problems were formulated to optimize sampling and actuation decisions jointly. Our results demonstrate that the semantics-aware randomized stationary policy outperforms the other policies for rapidly changing sources. In contrast, the threshold-aware randomized stationary policy is more effective for slowly changing sources under limited sampling and transmission budgets. Moreover, the proposed randomized actuation policy substantially reduces incorrect actuation decisions and enhances overall system performance. These results demonstrate that effective real-time monitoring requires the joint design of sampling, transmission, and actuation policies, as accurate state estimation alone is insufficient to guarantee reliable actuation decisions.
    \appendix
	\subsection{Proof of Lemma {\ref{theorem_InstError}}}
	\label{Appendix_LemmaPij_Et}
 For an $N$-state DTMC information source, the probability that $E(t) = i$ for $i \geqslant 0$ can be expressed as follows:
    \begin{align}
    \label{dist_RS}
        \mathbb{P}\big[E(t) = i\big] &= \mathbb{P}\big[X(t)-\hat{X}(t)=\pm i\big]\notag\\
        &=\begin{cases}
            \displaystyle{\sum_{k=0}^{N-1} \pi_{k,k}},& i=0,\\
            \displaystyle{\sum_{k=0}^{N-1-i} \pi_{k,i+k}+\sum_{m=0}^{N-1-i} \pi_{i+m,m}},&i\neq 0.
        \end{cases}
    \end{align}
    Note that $\pi_{i,j}$ in \eqref{dist_RS} are the probabilities obtained from the stationary distribution of the two-dimensional DTMC describing the joint state of the original and reconstructed source, i.e., $\big(X(t),\hat{X}(t)\big)$, where the transition probabilities $P_{m,n/m',n'}=\mathbb{P}[X(t+1)=m', \hat{X}(t+1) = n'|X(t)=m,\hat{X}(t)=n]$, $\forall m,n,m',n' \in\{0, 1, 2, \cdots, N-1\}$ for the RS policy are given by:
\begin{align}
    \label{transprob_k}
   & P_{m,n/m',n'}=\notag\\ 
    &\begin{cases}
        \!q,&\!\!\!\!\!\! m\!=\!n\!=\!m'\!=\!n',\\
        \!p, &\!\!\!\!\!\! m\!\neq\! n, m'\!=\!n'\!=\!n,\\
        \!q(1\!-\!p^{r}_{\alpha}p_{s}),&\!\!\!\!\!\! m\neq n, m'\!=\!m, n'\!=\!n,\\
        \!qp^{r}_{\alpha}p_{s},&\!\!\!\!\!\! m\!\neq\! n, m'\!=\!m\!=\!n',\\
        \!p(1\!-\!p^{r}_{\alpha}p_{s}),&\!\!\!\!\!\! \big\{m\!=\!n\!=\!n', m'\!\neq\! m\big\} \text{or} \big\{m'\!\neq\! m\!\neq\! n, n'\!=\!n\big\},\\
        \!pp^{r}_{\alpha}p_{s},&\!\!\!\!\!\!\big\{m\!=\!n, m'\!=\!n'\!\neq\! m\big\} \text{or} \big\{m'\!\neq\! m\!\neq\! n, m'\!\!=\!\!n'\!\big\},\\
        \!0, &\!\!\!\!\!\! \text{otherwise}.
    \end{cases}
\end{align}
Now, the stationary distribution can be obtained by solving \emph{balance equations} as follows:
\begin{align}
    \label{BalanceEq}
    \pi P_{I} = \pi, \hspace{0.2cm}\sum_{i=0}^{N-1}\sum_{j=0}^{N-1}\pi_{i,j} = 1,
\end{align}
where $P_{I}$ in \eqref{BalanceEq} represents the transition probability matrix, with elements defined in \eqref{transprob_k}. Additionally, $\pi$ denotes the stationary distribution, represented as a row vector $\pi = [\pi_{0,0},\!\cdots\!,\pi_{0,N\!-\!1}, \pi_{1,0},\!\cdots\!, \pi_{1,N\!-\!1},\!\cdots\!,\pi_{N-1,0},\!\cdots\!,\pi_{N-1,N-1}]$. Now, using \eqref{transprob_k} and \eqref{BalanceEq}, and noting that $q = 1 - (N-1)p$, the steady-state probabilities $\pi_{i,j}$, $\forall i,j \in \{0,1,\dots,N-1\}$, for the RS policy can be calculated as follows:
\begin{align}
\label{pij_RS}
            \pi_{i,j} =
            \begin{cases}
                \frac{p+(1-p)p^{r}_{\alpha}p_{s}}{Np^{r}_{\alpha}p_{s}+N^{2}p(1-p^{r}_{\alpha}p_{s})}, i=j,\\
                \frac{p(1-p^{r}_{\alpha}p_{s})}{Np^{r}_{\alpha}p_{s}+N^{2}p(1-p^{r}_{\alpha}p_{s})}, i\neq j.
            \end{cases}
        \end{align}
    Using \eqref{pij_RS}, \eqref{dist_RS} for the RS policy can be written as:
    \begin{align}
    \label{dist_RS2}
        \mathbb{P}\big[E(t) \!=\! i\big] \!\!=\!\!\begin{cases}
          \frac{p+(1-p)p^{r}_{\alpha}p_{s}}{p^{r}_{\alpha}p_{s}+Np(1-p^{r}_{\alpha}p_{s})}, &\!\!\!\!i \!=\!0,\\
            \frac{2p(N-i)(1-p^{r}_{\alpha}p_{s})}{Np^{r}_{\alpha}p_{s}+N^{2}p(1-p^{r}_{\alpha}p_{s})}, &\!\!\!\!i\!\neq\! 0, N\!>\!i,\\
            0, &\!\!\!\!i\neq 0,2\!\leqslant\! N \!\leqslant\! i.
        \end{cases}
    \end{align}
    Now, using \eqref{dist_RS2}, we can calculate the time-averaged reconstruction error under the RS policy as:
    \begin{align}
    \label{avgE_RS_proof}
        \bar{E}^{\text{RS}}\! =\! \sum_{i=1}^{N-1}i\mathbb{P}\big[E(t)\!=\!i\big] \!=\! \frac{p(N^{2}-1)(1-p^{r}_{\alpha^{s}}p_{s})}{3\big[p^{r}_{\alpha}p_{s}+Np(1-p^{r}_{\alpha}p_{s})\big]}.
    \end{align}
    Similarly, for the SARS policy, we can obtain \eqref{pij_RS} as follows:
    \begin{align}
        \label{pij_SARS}
         \pi_{i,j} =
            \begin{cases}
                \frac{p+(1-p)q_{\alpha_{2}}p_{s}}{N^{2}p-N(N-1)pq_{\alpha_{1}}p_{s}+N(1-p)q_{\alpha_{2}}p_{s}},& i=j,\\
                \frac{p(1-q_{\alpha_{1}p_{s}})}{N^{2}p-N(N-1)pq_{\alpha_{1}}p_{s}+N(1-p)q_{\alpha_{2}}p_{s}}, &i\neq j.
            \end{cases}
    \end{align}
     Using \eqref{pij_SARS}, the probability that $E(t)$ equals $i\geqslant 0$ for the SARS policy is given by:
    \begin{align}
    \label{PE_SA}
        &\mathbb{P}\big[E(t) \!=\! i\big] \notag\\
        &=\!\begin{cases}
        \frac{p+(1-p)q_{\alpha_{2}}p_{s}}{Np-(N-1)pq_{\alpha_{1}}p_{s}+(1-p)q_{\alpha_{2}}p_{s}},&\!\!\!i\!=\!0,\\
      \frac{2(N-i)p(1-q_{\alpha_{1}}p_{s})}{N^{2}p-N(N-1)pq_{\alpha_{1}}p_{s}+N(1-p)q_{\alpha_{2}}p_{s}}, &\!\!\!i\!\neq\! 0, N\!>\!i,\\
            0, &\!\!\!i\!\neq\! 0, 2\!\leqslant\! N \!\leqslant\! i.
        \end{cases}
    \end{align}
    Using \eqref{PE_SA}, the time-averaged reconstruction error under the SARS policy can be calculated as:
    \begin{align}
    \label{avgE_SARS}
        \bar{E}^{\text{SARS}} &= \sum_{i=0}^{N-1}i\mathbb{P}\big[E(t)=i\big] \notag\\
        &=\! \frac{p(1-q_{\alpha_{1}}p_{s})(N^{2}-1)}{3\big[Np-(N-1)pq_{\alpha_{1}}p_{s}+(1-p)q_{\alpha_{2}}p_{s}\big]}.
    \end{align}
   Furthermore, for the CARS policy, \eqref{pij_RS} can be written as follows:
    \begin{align}
        \label{pij_CARS}
         \pi_{i,j} =
            \begin{cases}
                \frac{1+(N-2)p^{c}_{\alpha}p_{s}}{N(N-p^{c}_{\alpha}p_{s})}, &i=j,\\
                \frac{1-p^{c}_{\alpha}p_{s}}{N(N-p^{c}_{\alpha}p_{s})}, &i\neq j.
            \end{cases}
    \end{align}
   Using \eqref{pij_CARS}, the probability that $E(t) = i$ for $i \geqslant 0$ under the CARS policy is given by:
    \begin{align}
    \label{PE_CARS}
        \mathbb{P}\big[E(t) = i\big] &=\begin{cases}
        \frac{1+(N-2)p^{c}_{\alpha}p_{s}}{N-p^{c}_{\alpha}p_{s}},& i =0,\\
            \frac{2(N-i)(1-p^{c}_{\alpha}p_{s})}{N(N-p^{c}_{\alpha}p_{s})}, &i\neq 0, N>i,\\
            0, &i\neq 0,2\leqslant N \leqslant i.
        \end{cases}
    \end{align}
 Using \eqref{PE_CARS}, the time-averaged reconstruction error under the CARS policy can be derived as:
    \begin{align}
    \label{avgE_CARS}
        \bar{E}^{\text{CARS}} = \sum_{i=0}^{N-1}i\mathbb{P}\big[E(t)=i\big] = \frac{(N^{2}-1)(1-p^{c}_{\alpha}p_{{s}})}{3(N-p^{c}_{\alpha}p_{{s}})}.
    \end{align}
    \subsection{Proof of Lemma {\ref{theorem_ConsecutiveError}}}
    \label{Appendix_Lemma_CE}
    \begin{figure}[!t]
	\centering
	\begin{tikzpicture}[start chain=going left,->,>=latex,node distance=1.65cm,on grid,auto]
 \footnotesize
		\node[on chain]                        (h2) {$\cdots$};
	   \node[state, on chain]              (h1) {$n$};
       \node[state, on chain]              (h0) {\scriptsize{$n\!\!-\!\!1$}};
		\node[on chain]               (g) {$\cdots$};
		\node[state, on chain]                 (2) {$1$};
		\node[state, on chain]                 (1) {$0$};
		\draw[>=latex]
		(1)   edge[loop above] node {\scriptsize{$q$}}   (1)
		(1) edge  [bend left=20] node {\scriptsize{$(N\!-\!1)p$}} (2)
		(2) edge  [bend left=20] node[above] {\scriptsize{$p$}} (1)
        (2) edge  [bend left=20] node[above] {\scriptsize{$1-p$}} (g)
		(g) edge  [bend left=33] node[above,pos=0.3] {\scriptsize{$p$}} (1)
        (h0) edge  [bend left=45] node[above] {\scriptsize{$P_{n-1,0}$}} (1)
        (h0) edge  [bend left=20] node[above] {\scriptsize{$P_{n-1,n}$}} (h1)
		(h1) edge  [bend left=56] node[above] {\scriptsize{$P_{n,0}$}} (1)
		(h1) edge  [bend left=20] node[above] {\scriptsize{$P_{n,n+1}$}} (h2)
		;
	\end{tikzpicture}
	\vspace*{-2ex}
	\caption{DTMC model for the evolution of the AoII under the TARS policy, where $q=1-(N-1)p$, $P_{n-1,0}=p+(1\!-\!p)p^{th}_{\alpha}p_{s}$, $P_{n-1,n}=(1\!-\!p)(1\!-\!p^{th}_{\alpha}p_{s})$, $P_{n,0}=p+(1-p)p_{s}$, and $P_{n,n+1}=(1-p)(1-p_{s})$.}
	\label{Fig_AoII}
\end{figure}
   Using the DTMC model that describes the evolution of the AoII metric under the TARS policy, as shown in Fig.~\ref{Fig_AoII}, the steady-state probabilities $\pi_{i}$, $ \forall i \in \{1,2,\dots\}$, for the TARS policy can be calculated as:
        \begin{align}
        \label{pi_CE_n_proof}
            &\pi_{i} =\notag\\
           & \begin{cases}
               (N\!-\!1)p(1\!-\!p)^{i-1}\pi_{0}, &\!\!\!1\leqslant i\leqslant n\!-\!1,\\
               (N\!-\!1)p(1\!-\!p^{th}_{\alpha}p_{s})(1\!-\!p)^{i-1}(1\!-\!p_{s})^{i-n}\pi_{0},&\!\!\!i\geqslant n,
            \end{cases}
        \end{align}
where $\pi_{0}$ is given by:
\begin{align}
    \label{pi0_CE_n_Proof}
    \pi_{0} \!=\! \frac{p+(1-p)p_{s}}{N\big(p\!+\!(1\!-\!p)p_{s}\big)\!-\!(N\!-\!1)\big(1\!-\!p(1\!-\!p^{th}_{\alpha})\big)p_{s}(1-p)^{n-1}}.
\end{align}
Using \eqref{pi_CE_n_proof}, the average AoII under the TARS policy can be obtained as follows:
\begin{align}
        \label{AvgCE_n_proof}
        \overline{\text{AoII}} &= \sum_{i=1}^{\infty}i\pi_{i}=\frac{F(p^{th}_{\alpha},n)}{G(p^{th}_{\alpha},n)},
    \end{align}
   where $F(p^{th}_{\alpha},n)$ and $G(p^{th}_{\alpha},n)$ are given by:
    \begin{align}
    \label{FG_threhold_proof}
        F(p^{th}_{\alpha},n)&\!=\!(N-1)\Big[(1-p)\big(p\!+\!(1-p)p_{s}\big)^{2}\!+\!(1-p)^{n}p_{s}\notag\\
        &\!\times\!\Big(3p^{2}\!\!-\!\!2p\!-\!np^{2}\!-\!p^{3}\!+\!np^{3}(1\!-\!p^{th}_{\alpha})\!-\!p^{2}p^{th}_{\alpha}\!+\!p^{3}p^{th}_{\alpha}\notag\\
        &\!+\!(1-p)p_{s}\big(1\!+\!(n-2)p\!-\!(n-1)p^{2}(1-p^{th}_{\alpha})\big)\Big)\Big],\notag\\
        G(p^{th}_{\alpha},n)&\!=\!p\big(p_{s}+(1-p_{s})p\big)\big[N(1-p)\big(p_{s}+(1-p_{s})p\big)\notag\\
        &-(N-1)(1-p)^{n}\big(1-(1-p^{th}_{\alpha})p\big)p_{s}\big].
    \end{align}
Similarly, for the RS policy, \eqref{pi_CE_n_proof} can be obtained as follows:
 \begin{align}
\label{pi_CE_RS_proof}
    \!\!\pi_{i}\!\!=\!\!
    \begin{cases}
        \!\!\frac{p+(1-p)p^{r}_{\alpha}p_{s}}{Np +(1-Np) p^{r}_{\alpha}p_{s}}, & \!\!\!i=0,\\
        \!\big[(1-p)(1-p^{r}_{\alpha}p_{s})\big]^{i-1}\!\big[(N-1)p(1-p^{r}_{\alpha}p_{s})\big]\pi_{0},&\!\!\! i\geqslant 1.
    \end{cases}
\end{align}
Using \eqref{pi_CE_RS_proof}, the average AoII for the RS policy is given by:
     \begin{align}
        \label{AvgCE_RS_proof}
        \overline{\text{AoII}} \!=\!\frac{(N-1)p(1-p^{r}_{\alpha}p_{s})}{\big[p+(1-p)p^{r}_{\alpha}p_{s}\big]\big[Np+(1-Np)p^{r}_{\alpha}p_{s}\big]}.
    \end{align}
Moreover, under the SARS policy, the steady-state probabilities $\pi_{i}$ for all $i \geqslant 0$ are given by:
\begin{align}
\label{pi_CE_SARS_proof}
    \!\!\pi_{i}\!=\!
    \begin{cases}
        \frac{p+(1-p)q_{\alpha_{2}}p_{s}}{Np +p(1-N)q_{\alpha_{1}}p_{s}+(1-p)q_{\alpha_{2}}p_{s}}, & \!\!i=0,\\
        \big[(1\!-\!p)(1\!-\!q_{\alpha_{2}}p_{s})\big]^{i-1}\big[(N\!-\!1)p(1\!-\!q_{\alpha_{1}}p_{s})\big]\pi_{0},& \!\!i\geqslant 1.
    \end{cases}
\end{align}
Using \eqref{pi_CE_SARS_proof}, the average AoII under the SARS policy can be calculated as follows:
 \begin{align}
        \label{AvgCE_CARS_proof}
        \overline{\text{AoII}} \!\!=\!\!\frac{(N\!-\!1)p(1\!-\!q_{\alpha_{1}}p_{s})}{\big[p\!+\!(1\!-\!p)q_{\alpha_{2}}p_{s}\big]\big[Np\!+\!(1\!\!-\!\!N)pq_{\alpha_{1}}p_{s}\!+\!(1\!-\!p)q_{\alpha_{2}}p_{s}\big]}.
    \end{align}
Furthermore, for the CARS policy, the steady-state probabilities $\pi_i$ for all $i \geqslant 0$ can be obtained as follows:
    \begin{align}
\label{pi_CE_CARS_proof}
    &\pi_{i}\notag\\
    &\!=\!
    \begin{cases}
        \!\frac{1+(N-2)p^{c}_{\alpha}p_{s}}{N-p^{c}_{\alpha}p_{s}}, &\!\!\! i=0,\\
        \!\big[1\!-\!p\big(1\!+\!(N\!-\!2)p^{c}_{\alpha}p_{s}\big)\big]^{i\!-\!1}\!\big[(N\!\!-\!\!1)p(1\!\!-\!\!p^{c}_{\alpha}p_{s})\big]\pi_{0},&\!\!\! i\geqslant 1.
    \end{cases}
\end{align}
Using \eqref{pi_CE_CARS_proof}, the average AoII under the CARS policy is given by:
 \begin{align}
        \label{AvgCE_CARS_proof}
        \overline{\text{AoII}} = \sum_{i=1}^{\infty}i\pi_{i}=\frac{(N-1)(1-p^{c}_{\alpha}p_{s})}{p(N-p^{c}_{\alpha}p_{s})\big[1+(N-2)p^{c}_{\alpha}p_{s}\big]}.
    \end{align}
    \subsection{Proof of Lemma {\ref{theorem_CoIA}}}	\label{Appendix_ControlIncorrectAction}
Using \eqref{CoIA} and the total probability theorem, the probability 
  $\mathbb{P}\big[\Delta(t)=0]$ can be written as follows:
        \begin{align}
           \label{PD0}  &\!\!\scalebox{0.9}{$\mathbb{P}\big[\Delta(t)=0\big]$}\notag\\
&\!\!=\scalebox{0.9}{$\mathbb{P}\big[c(t)=1\big|X(t)\!=\!\hat{X}(t),\alpha(t)\!=\!0\big]\mathbb{P}\big[X(t)\!=\!\hat{X}(t),\alpha(t)\!=\!0\big]$}\notag\\
           &\!\!+\scalebox{0.9}{$\mathbb{P}\big[c(t)=1\big|X(t)=\hat{X}(t),\alpha(t)=1,h(t)=0\big]$}\notag\\
           &\!\!\times\scalebox{0.9}{$\mathbb{P}\big[X(t)=\hat{X}(t),\alpha(t)=1,h(t)=0\big]$}\notag\\
           &\!\!+\scalebox{0.9}{$\mathbb{P}\big[c(t)=1\big|X(t)=\hat{X}(t),\alpha(t)=1,h(t)=1\big]$}\notag\\
           &\!\!\times\scalebox{0.9}{$\mathbb{P}\big[X(t)=\hat{X}(t),\alpha(t)=1,h(t)=1\big]$}\notag\\
           &\!\!+\scalebox{0.9}{$\mathbb{P}\big[c(t)=0\big|X(t)\!\neq\!\hat{X}(t),\alpha(t)\!=\!0\big]\mathbb{P}\big[X(t)\!\neq\!\hat{X}(t),\alpha(t)\!=\!0\big]$}\notag\\&\!\!+\scalebox{0.9}{$\mathbb{P}\big[c(t)=0\big|X(t)\neq\hat{X}(t),\alpha(t)=1,h(t)=0\big]$}\notag\\
           &\!\!\times\scalebox{0.9}{$\mathbb{P}\big[X(t)\neq\hat{X}(t),\alpha(t)=1,h(t)=0\big]$}\notag\\
           &\!\!=\scalebox{0.9}{$p_{c_{2}}\Big(\!\mathbb{P}\big[X(t)\!\!=\!\!\hat{X}(t),\alpha(t)\!=\!0\big]\!\!+\!\!\mathbb{P}\big[X(t)\!\!=\!\!\hat{X}(t),\alpha(t)\!=\!1,h(t)\!=\!0\big]\!\Big)$}\!\!\notag\\
           &\!\!+\!\!\scalebox{0.9}{$(1\!-\!p_{c_{2}})\!\Big(\!\mathbb{P}\big[X(t)\!\!\neq\!\!\hat{X}(t),\alpha(t)\!\!=\!\!0\big]\!\!+\!\!\mathbb{P}\big[X(t)\!\!\neq\!\!\hat{X}(t),\alpha(t)\!\!=\!\!1,h(t)\!\!=\!\!0\big]\!\Big)$}\notag\\
           &\!\!+\scalebox{0.9}{$p_{c_{1}}\mathbb{P}\big[X(t)\!\!=\!\!\hat{X}(t),\alpha(t)\!=\!1,h(t)\!=\!1\big]$},
        \end{align}
      When the source is sampled using the RS policy, the probabilities in \eqref{PD0} can be obtained as follows:
        \begin{align}
            &\mathbb{P}\big[X(t)=\hat{X}(t),\alpha(t)=0\big]\notag\\
            &= \sum_{i=0}^{N-1}\Big(\mathbb{P}\big[X(t)\!=\!\hat{X}(t),\alpha(t)\!=\!0\big|X(t-1)\!=\!i,\hat{X}(t-1)\!=\!i\big]\notag\\
            &\hspace{2cm}\times \mathbb{P}\big[X(t-1)\!=\!i,\hat{X}(t-1)\!=\!i\big]\Big)\notag\\
            &+\!\!\sum_{i=0}^{N-1}\!\sum_{\substack{j=0 \\ j \neq i}}^{N-1}\!\Big(\mathbb{P}\big[X(t)\!\!=\!\!\hat{X}(t),\alpha(t)\!\!=\!\!0\big|X(t\!-\!1)\!=\!i,\hat{X}(t\!-\!1)\!=\!j\big]\notag\\
            &\times \!\!\mathbb{P}\big[\!X(t\!\!-\!\!1)\!\!=\!i,\hat{X}(t\!-\!1)\!\!=\!\!j\big]\!\Big)\!\!=\!\!\frac{(1\!-\!p^{r}_{\alpha})\big(p+(1-Np)p^{r}_{\alpha}p_{s}\big)}{p^{r}_{\alpha}p_{s}\!+\!Np(1\!-\!p^{r}_{\alpha}p_{s})}.\label{Cond1_RS}
        \end{align}
       Using a procedure similar to that in \eqref{Cond1_RS}, one can obtain the probabilities $\mathbb{P}\big[X(t)=\hat{X}(t),\alpha(t)=1,h(t)=0\big]$, $\mathbb{P}\big[X(t)=\hat{X}(t),\alpha(t)=1,h(t)=1\big]$, $\mathbb{P}\big[X(t)\neq\hat{X}(t),\alpha(t)=0\big]$, $\mathbb{P}\big[X(t)\neq\hat{X}(t),\alpha(t)=1,h(t)=0\big]$ as follows:
        \begin{align}
  &\mathbb{P}\big[X(t)=\hat{X}(t),\alpha(t)=1,h(t)=0\big]\notag\\
  &= \frac{p^{r}_{\alpha}(1-p_{s})(p+(1-Np)p^{r}_{\alpha}p_{s})}{p^{r}_{\alpha}p_{s}\!+\!Np(1\!-\!p^{r}_{\alpha}p_{s})},\label{Cond2_RS}\\
            &\mathbb{P}\big[X(t)=\hat{X}(t),\alpha(t)=1,h(t)=1\big]=p^{r}_{\alpha}p_{s},\label{Cond3_RS}\\
            &\mathbb{P}\big[X(t)\neq\hat{X}(t),\alpha(t)=0\big]=\frac{(N-1)p(1-p^{r}_{\alpha})}{Np+(1-Np)p^{r}_{\alpha}p_{s}},\label{Cond4_RS}\\
            &\mathbb{P}\big[X(t)\!\neq\!\hat{X}(t),\alpha(t)\!=\!1,h(t)\!=\!0\big]\!=\!\frac{(N\!-\!1)pp^{r}_{\alpha}(1\!-\!p_{s})}{Np\!+\!(1\!-\!Np)p^{r}_{\alpha}p_{s}}.\label{Cond5_RS}
        \end{align}
  Now, using \eqref{PD0} and eqs.~\eqref{Cond1_RS}--\eqref{Cond5_RS}, the probability $\mathbb{P}\big[\Delta(t)=0\big]$ when the source is sampled using the RS policy is given by:
         \begin{align}
           \label{Delta_RS_proof}
           P^{\text{RS}}_{\Delta_{0}}&=\frac{(N-1)p(1-p^{r}_{\alpha}p_{s})}{Np+(1-Np)p^{r}_{\alpha}p_{s}}+p^{r}_{\alpha}p_{s}p_{c_{1}}\notag\\
           &+\frac{(1-p^{r}_{\alpha}p_{s})\big[2p-Np+(1-Np)p^{r}_{\alpha}p_{s}\big]p_{c_{2}}}{Np+(1-Np)p^{r}_{\alpha}p_{s}}.
        \end{align}
         \subsection{Proof of Lemmas \ref{theorem_Optimization_AvgE_SARS}, \ref{theorem_Optimization_AvgE_CARS}, and \ref{theorem_Optimization_AvgCost_THR}}
         \label{AvgSamplinCost_Proof}
     For the SARS policy, the constraint in \eqref{OptProb_AvgE_SARS_Const} can be written as follows:      
        \begin{align}
\label{AvgSamplingCost_SARS_proof}
 &\lim_{T \to \infty}\frac{1}{T}\mathbb{E}\Bigg[\sum_{t=1}^{T} \mathbbm{1}\{\alpha(t)=1\}\Bigg]\notag\\
 &=\mathbb{P}\big[\alpha(t)=1\big|X(t)\neq\hat{X}(t-1),X(t-1)=\hat{X}(t-1)\big]\notag\\
 &\times\mathbb{P}\big[X(t)\neq\hat{X}(t-1),X(t-1)=\hat{X}(t-1)\big]\notag\\
 &+\mathbb{P}\big[\alpha(t)=1\big|X(t)\neq\hat{X}(t-1),X(t-1)\neq\hat{X}(t-1)\big]\notag\\
 &\times\mathbb{P}\big[X(t)\neq\hat{X}(t-1),X(t-1)\neq\hat{X}(t-1)\big]\notag\\
 &= q_{\alpha_{1}}N(N-1)p\pi_{i,i}+ q_{\alpha_{2}}N(N-1)(1-p))\pi_{i,j},
        \end{align} 
        Using \eqref{pij_SARS}, \eqref{AvgSamplingCost_SARS_proof} is simplified as:
         \begin{align}
\label{AvgSamplingCost_SARS2}
 &\lim_{T \to \infty}\frac{1}{T}\mathbb{E}\Bigg[\sum_{t=1}^{T} \mathbbm{1}\{\alpha(t)=1\}\Bigg]\notag\\
 &=\frac{ p(N-1)\big[p(q_{\alpha_{1}}-q_{\alpha_{2}})+q_{\alpha_{2}}\big]}{Np-(N-1)pq_{\alpha_{1}}p_{s}+(1-p)q_{\alpha_{2}}p_{s}}.
        \end{align} 
       Furthermore, the time-averaged sampling frequency for the CARS policy can be calculated as follows:
         \begin{align}
\label{AvgSamplingCost_CARS}
 &\lim_{T \to \infty}\frac{1}{T}\mathbb{E}\Bigg[\sum_{t=1}^{T} \mathbbm{1}\{\alpha(t)=1\}\Bigg]\notag\\
     &= \mathbb{P}\big[\alpha(t)=1\big|X(t)\neq X(t-1)\big]\mathbb{P}\big[X(t)\neq X(t-1)\big]\notag\\
 &= p^{c}_{\alpha}\sum^{N-1}_{i=0}\mathbb{P}\big[X(t)\!\neq\! X(t-1)\big|X(t-1)\!=\!i\big]\mathbb{P}\big[X(t-1)\!=\!i\big]\notag\\
 &=(N-1)pp^{c}_{\alpha}.
        \end{align}
       Moreover, the time-averaged sampling frequency under the TARS policy is given by:
\begin{align}
    \label{AvgCost_THR}
    \lim_{T \to \infty}\frac{1}{T}\mathbb{E}\Bigg[\sum_{t=1}^{T} \mathbbm{1}\{\alpha(t)=1\}\Bigg] \!=\!  p^{th}_{\alpha}\pi_{n-1}\!+\!\sum_{i=n}^{\infty}\pi_{i},
\end{align}
Using \eqref{pi_CE_n_proof} and \eqref{pi0_CE_n_Proof}, we can write \eqref{AvgCost_THR} as follows:
\begin{align}
         &\lim_{T \to \infty}\frac{1}{T}\mathbb{E}\Bigg[\sum_{t=1}^{T} \mathbbm{1}\{\alpha(t)=1\}\Bigg]\notag\\
         &\!=\!
         \begin{cases}
             \frac{p(N-1)+\big(p-(Np-1)p_{s}\big)p^{th}_{\alpha}}{Np+(1-p)p_{s}-(N-1)pp^{th}_{\alpha}p_{s}},\hspace{1.7cm} &n=1,\notag\\
             \frac{(N-1)p(1-p)^{n-1}\big(1-(1-p^{th}_{\alpha})p\big)}{N(1\!-\!p)\big(p+(1\!-\!p)p_{s}\big)-(N\!-\!1)(1\!-\!p)^{n}\big(1-(1-p^{th}_{\alpha})p\big)p_{s}},&n\geqslant 2.
         \end{cases}
    \end{align}
        \subsection{Proof of Lemmas {\ref{theorem_Optimization_PD0_RS}, \ref{theorem_Optimization_PD0_SARS}, and \ref{theorem_Optimization_PD0_CARS}}}	\label{Appendix_CoIA_TAvgSamplingCost}
       Using the total probability theorem, the constraint in \eqref{Optimization_prob_PD0_constraint1} can be written as:
        \begin{align}
		\label{AvgSCost_CoIA_RS_proof}
		&\lim_{T \to \infty}\frac{1}{T}\mathbb{E}\Bigg[\sum_{t=1}^{T} \mathbbm{1}\{c(t)=1\}\Bigg]\notag\\
    &=\mathbb{P}\big[c(t)\!=\!1\big|X(t)\!=\!\hat{X}(t),\alpha(t)\!=\!0\big]\mathbb{P}\big[X(t)\!=\!\hat{X}(t),\alpha(t)\!=\!0\big]\notag\\
    &+\mathbb{P}\big[c(t)=1\big|X(t)=\hat{X}(t),\alpha(t)=1,h(t)=0\big]\notag\\
    &\times\mathbb{P}\big[X(t)=\hat{X}(t),\alpha(t)=1,h(t)=0\big]\notag\\
    &+\mathbb{P}\big[c(t)=1\big|X(t)=\hat{X}(t),\alpha(t)=1,h(t)=1\big]\notag\\
    &\times\mathbb{P}\big[X(t)=\hat{X}(t),\alpha(t)=1,h(t)=1\big]\notag\\
    &+\mathbb{P}\big[c(t)=1\big|X(t)\neq \hat{X}(t)\big]\mathbb{P}\big[X(t)=\hat{X}(t)\big],
	\end{align}
  Using eqs.~\eqref{Cond1_RS}--\eqref{Cond5_RS}, the time-averaged actuation frequency when the source is sampled using the RS policy is given by:
    \begin{align}
		\label{AvgSCost_CoIA_RS2_proof}
		&\lim_{T \to \infty}\frac{1}{T}\mathbb{E}\Bigg[\sum_{t=1}^{T}\mathbbm{1}\{c(t)\!=\!1\}\Bigg]\!=\!p^{r}_{\alpha}p_{s}p_{c_{1}}\!\!+\!\!(1\!-\!p^{r}_{\alpha}p_{s})p_{c_{2}}.
	\end{align}
Furthermore, the time-averaged actuation frequency when the source is sampled using the SARS policy is calculated as follows:
    \begin{align}
		\label{AvgSCost_CoIA_SARS_proof}
		&\lim_{T \to \infty}\frac{1}{T}\mathbb{E}\Bigg[\sum_{t=1}^{T}\mathbbm{1}\{c(t)=1\}\Bigg]\notag\\
        &=\Bigg(\frac{(N-1)pp_{s}\big[p(q_{\alpha_{1}}-q_{\alpha_{2}})+q_{\alpha_{2}}\big]}{Np+(1-N)pq_{\alpha_{1}}p_{s}+(1-p)q_{\alpha_{2}}p_{s}}\Bigg)p_{c_{1}}\notag\\
        &+\Bigg(\frac{Np + (1 - N)p (1 + p)  q_{\alpha_{1}}p_{s}}{Np+(1-N)pq_{\alpha_{1}}p_{s}+(1-p)q_{\alpha_{2}}p_{s}}\notag\\
        &+\frac{(1-p)\big(1+(1-N)p\big)q_{\alpha_{2}}p_{s}}{Np+(1-N)pq_{\alpha_{1}}p_{s}+(1-p)q_{\alpha_{2}}p_{s}}\Bigg)p_{c_{2}}.
	\end{align}
Moreover, when the source is sampled under the CARS policy, the time-averaged actuation frequency is given by:
\begin{align}
		\label{AvgSCost_CoIA_CARS_proof}
		&\lim_{T \to \infty}\frac{1}{T}\mathbb{E}\Bigg[\sum_{t=1}^{T} \mathbbm{1}\{c(t)=1\}\Bigg]\notag\\
        &=(N-1)pp^{c}_{\alpha}p_{s}p_{c_{1}}+\big(1+(1-N)pp^{c}_{\alpha}p_{s}\big)p_{c_{2}}.
	\end{align}
\bibliographystyle{IEEEtran}
\bibliography{ref}
\end{document}